\newtheorem{theorem}{Theorem}
\newtheorem{proposition}{Proposition}
\newtheorem{mydef}{Definition}
\newcommand{\kit}[1]{#1_{t}^{(i)}}
\newcommand{\kitt}[1]{#1_{ii}(t)}
\newcommand{\ki}[1]{#1^{(i)}}
\newcommand{\mc}[1]{\mathcal{#1}}
\newcommand{\mb}[1]{\mathbb{#1}}
\title{Learning Equilibria of a Stochastic Game on Gaussian Interference Channels with Incomplete Information}
\author{\IEEEauthorblockN{Krishna Chaitanya A, Vinod Sharma, and Utpal Mukherji\footnote{Part of this paper is presented in IEEE International Workshop on Signal Processing Advances in Wireless Communications 2015}}\\
\IEEEauthorblockA{Department of ECE, Indian Institute of Science, Bangalore-560012 \\ Email: $\lbrace$akc, vinod, utpal$\rbrace$ @ece.iisc.ernet.in}
}
\date{}
\begin{document}
\maketitle
\begin{abstract}
We consider a wireless communication system in which $N$ transmitter-receiver pairs want to communicate with each other.  Each transmitter transmits data at a certain rate using a power that depends on the channel gain to its receiver.  If a receiver can successfully receive the message, it sends an acknowledgement (ACK), else it sends a negative ACK (NACK).  Each user aims to maximize its probability of successful transmission.  We formulate this problem as a stochastic game and propose a fully distributed learning algorithm to find a correlated equilibrium (CE).  In addition, we use a no regret algorithm to find a coarse correlated equilibrium (CCE) for our power allocation game.  We also propose a fully distributed learning algorithm to find a Pareto optimal solution.  In general Pareto points do not guarantee fairness among the users, therefore we also propose an algorithm to compute a Nash bargaining solution which is Pareto optimal and provides fairness among users.  Finally, under the same game theoretic setup, we study these equilibria and Pareto points when each transmitter sends data at multiple rates rather than at a fixed rate.  We compare the sum rate obtained at the CE, CCE, Nash bargaining solution and the Pareto point and also via some other well known recent algorithms.
\end{abstract}
{\keywords Interference channel, stochastic game, correlated equilibrium, distributed learning, Pareto point.}
\section{Introduction}
In wireless communications due to broadcast medium, transmissions by a user cause interference to other users.  This reduces the transmission rate and/or increases the transmission error of the other users.  Therefore each user (transmitter) aims to use resources like, power and spectrum efficiently to improve its performance which may conflict with the goals of the other users.  Thus, in this paper, we present a game theoretic approach to obtain optimal power allocations that achieve an equilibrium among the different transmitters.\par
We consider a single wireless channel which is being shared by multiple transmitter-receiver pairs.  It is modeled as an interference channel.  Its power allocation in non-game theoretic setup has been studied in \cite{deng}-\cite{PA} and via game theory in \cite{palomar}-\cite{arxiv}.  Power allocation for parallel Gaussian interference channels is studied in \cite{palomar}.  Convergence of iterated best response for computing Nash equilibrium under some conditions on the channel gains is studied for single antenna systems in \cite{palomar}, \cite{Conv}.  In \cite{MIMO_IWF}, convergence of iterative water-filling for parallel Gaussian interference channels with multiple antennas is studied.  Different algorithms are presented in \cite{async}, and \cite{luca} to compute a Nash equilibrium for parallel Gaussian interference channels.  In \cite{arxiv}, we presented an algorithm to compute a Nash equilibrium for a stochastic game on Gaussian interference channels under channel conditions weaker those in \cite{palomar}.\par
In general, in a wireless communication system, a user may not have a knowledge about the other users' channel states and their power policies.  In such a setup, one needs distributed algorithms which each user can use to achieve optimal policies that require less information about the other users.  Online learning algorithms are such a class of algorithms \cite{game_net}.  Some of these algorithms, for example, fictitious play \cite{L_mixed}, are partially distributed algorithms that require some knowledge about other users' strategies to find an equilibrium of the system.  On the other hand, there exist fully distributed learning algorithms \cite{game_net} which do not need any information about the other users' strategies or payoffs to find an equilibrium. \par
We describe the prior work in the literature of wireless communications that has considered learning for optimal power allocation.  The problem of minimizing energy consumption in point-to-point communication with delay constraints in stochastic and unknown traffic and channel conditions is considered in \cite{R_wc}.  This problem is modeled as a Markov decision process and solved using online reinforcement learning.\par
In \cite{D_mac}, orthogonal multiple access channels are considered.  The problem of power allocation is modeled as a non-cooperative potential game and distributed learning algorithms are proposed.  A learning algorithm for finding a Nash equilibrium for a multiple-input and multiple-output multiple access channel is proposed in \cite{L_mimo}.\par
The problem of minimizing the total transmit power of a parallel Gaussian interference channel subject to a minimum signal to interference plus noise ratio (SINR) is considered in \cite{D_te}.  Fully distributed algorithms based on trial and error are proposed to find a Nash equilibrium and a satisfaction equilibrium.  Learning in wireless networks is also considered in the game theoretic framework in \cite{S_chan}-\cite{Q_fn}.  We refer to \cite{game_net} for more information on game theory and learning algorithms for wireless communications.\par
In \cite{r_ce}, a learning algorithm using regrets is proposed to find a CE in a finite game.  Unlike fully distributed algorithms, this no-regret algorithm requires the knowledge of actions chosen by the other users from each play of the game.  The same authors presented a fully distributed learning algorithm in \cite{rl_ce} that leads to a CE when the players are not aware of the functional form of their utility functions.\par
Fully distributed algorithms to find a Nash equilibrium are developed in \cite{L_eff}-\cite{O_agent}.  The algorithms in \cite{L_eff}, \cite{L_tel} are based on trial and error.  Using this algorithm users approach strategies that play a pure strategy Nash equilibrium for a high portion of time.  For potential and dominance solvable games, reinforcement learning algorithms in \cite{L_rlrd}, \cite{O_agent} converge to a NE.\par
We consider a power allocation game on a wireless interference channel.  It is neither a potential game nor dominance solvable.  Even existence of a pure strategy NE is not guaranteed.  Therefore, we can not use the algorithms in \cite{L_rlrd}, \cite{O_agent} to obtain an equilibrium point.  Thus we propose a variation of the regret matching algorithm to find a CE of the proposed game on the wireless communication system without knowing the strategies chosen by other users.  The algorithms proposed in this paper is fully distributed.\par
Apart from learning a non-cooperative equilibrium, learning algorithms for finding a Pareto point also exist in literature (\cite{A_opt}, \cite{A_learn}).  Such points can substantially outperform a CE.\par
\subsection{Contribution}
We make the following contributions in this paper:
\begin{itemize}
\item We propose a fully distributed regret matching algorithm \cite{r_ce} that finds a correlated equilibrium (CE) of the interference channel.  The usual regret matching algorithm is a partially distributed algorithm which requires knowledge of the strategies of the other users.  We propose a modification of that algorithm to convert it into a fully distributed algorithm.  We also compare the sum rate at the CE obtained by our algorithm with that obtained from the algorithm in \cite{rl_ce} and we note that our algorithm converges faster than the algorithm in \cite{rl_ce}.
\item We use a fully distributed no-regret dynamics to compute a CCE of our power allocation game.  In general, every CE is a CCE but the converse may not be true.
\item We propose a fully distributed learning algorithm to find a Pareto point for our game and compare its sum rates with that of a CE.
\item Even though Pareto points outperform CE, and CCE, fairness among users is not guaranteed at a Pareto point.  Using a minor variation of the proposed algorithm to compute a Pareto point, we compute a Nash bargaining solution which guarantees fairness among users.
\item Later, we show that we can use the proposed algorithms to compute CE, CCE, Pareto point and Nash bargaining solutions when each transmitter sends at multiple rates rather than at a fixed rate.
\end{itemize}\par
This paper is organized as follows.  In Section \ref{sm}, we describe the system model and define the problem in game theoretic framework.  We propose and analyse a learning algorithm to find a CE in Section \ref{l_ce}.  For our game, we find Pareto points through a fully distributed learning algorithm in Section \ref{pt}.  We study Nash bargaining solutions for our game in Section \ref{nb}.  We use multiplicative weight algorithm to find CCE of our game in Section \ref{cce}.  In Section \ref{m_rates}, we extend the algorithms to the power allocation problem where each transmitter can transmit at multiple rates.  In Section \ref{ne}, we compare the sum of utilities of all the users at a CE and at a Pareto point and also with other algorithms also for some numerical examples.  Section \ref{concl} concludes the paper.
\section{System Model}\label{sm}
We consider a wireless channel being shared by $N$ independent transmitter-receiver pairs.  Transmission from each transmitter causes interference at other receivers.  The transmitted signal from every transmitter undergoes fading.  The fading gain experienced by the intended signal at a receiver from its corresponding transmitter is called the direct link channel gain.  Similarly, the fading gain experienced by other unintended signals at a receiver is called the cross link channel gain.  We model this scenario as a Gaussian interference channel with fading, where each receiver perceives the transmitted signal with additive white Gaussian noise. \par
Let $\mathcal{H}_d^{(i)} = \{ h_1^{(i)},\dots,h_{n_i}^{(i)}\}$ be the direct link channel gain alphabet and $\mathcal{H}_c^{(i)} = \{ g_1^{(i)},\dots,g_{s_i}^{(i)}\}$ be the cross link channel gain alphabet of user $i \in \{1,\dots,N\}$.  Let the random variable $H_{ij}(t)$ denote the channel gain from transmitter $j$ to receiver $i$ in time slot $t$ which is assumed to be a constant during the slot.  Observe that $H_{ii}(t) \in \mathcal{H}_d^{(i)}$, and $H_{ij}(t) \in \mathcal{H}_c^{(i)}$ for $i\neq j$.  We denote a realization of $H_{ij}(t)$ by $h_{ij}(t)$.  We assume that for a fixed $i,j \in \{1,2,\dots,N\},$ the random variables $H_{ij}(t), t = 1,2,\dots$, are independent and identically distributed.  We also assume that $H_{ij}(t)$ are statistically independent for any $i,j \in \{1,2,\dots,N\}$, and $t = 1,2,\dots$. \par
We assume that transmitter $i$ knows $H_{ii}(t)$ in the begin of slot $t$ but not $H_{ij}(t)$, $i \neq j$; in fact it does not know the distribution of $H_{ij}(t)$ also.  We also assume that transmitter $i$ has finite power levels $p_1^{(i)},\dots,p_{m_i}^{(i)}$ to transmit in a slot.  This is a typical wireless scenario.  For example if, a receiver is sending an ACK/NACK to its transmitter and it is a time duplex channel, then a transmitter can estimate its direct link channel gain but will not know the cross link channel gains;  nor will it know the transmit powers used by the other transmitters.\par
User $i$ transmits $r_i$ bits in every channel use at a power level which depends on the direct link channel gain.  If receiver $i$ successfully receives the message sent in that slot, it sends an ACK (positive acknowledgment), else the receiver sends back an NACK (negative acknowledgement) at the end of the slot.  We assume that ACK messages are small and sent at a low rate such that these are received with negligible probability of error and its transmission overheads are ignored.  This is typically assumed \cite{tcp}.  For a Gaussian channel, the probability of error is a function of the received SINR and the modulation and coding used.  For a given coding and modulation, we can fix a minimum SINR such that the probability of error is negligible above this SINR.  To be specific, we assume that if in time slot $t$, user $i$ transmitted $r_i$ bits per channel use at a power level $p_{l}^{(i)},$ $l \in \{1,2,\dots,m_i\}$ and $H_{ij}(t) = h_{ij}(t)$ then, transmitter $i$ receives an ACK from its corresponding receiver if and only if
\begin{equation*}
  r_i \leq \text{log}\Big(1 + \frac{|h_{ii}(t)|^2p_{l_i}^{(i)}}{1+\Gamma_i\sum_{j\neq i}|h_{ij}(t)|^2p_{l_j}^{(j)}}\Big),  \text{ i.e., }\Gamma_i\sum_{j\neq i}|h_{ij}(t)|^2p_{l_j}^{(j)} \leq \frac{|h_{ii}(t)|^2p_{l_i}^{(i)}}{2^{r_i} -1}-1,
\end{equation*}
and the transmitter receives a NACK, otherwise; where $\Gamma_i$ is a constant that depends on the modulation and coding used by the $i$th receiver.  In the following we will take $\Gamma_i = 1$ for all $i$ for convenience.\par
We consider stationary policies, i.e., the power used by user $i$ in slot $t$ depends only on the channel gain $H_{ii}(t)$ but is independent of time $t$.  Thus, we define the feasible action space of user $i$ by 
\begin{equation}
\mathcal{A}^{(i)} = \left\{{\bf P}^{(i)}=(P^{(i)}_{1},\dots,P^{(i)}_{n_i})|P^{(i)}_l \in \{p_1^{(i)},\dots,p_m^{(i)}\},\right. \left. \sum_{l=1}^{n_i}\pi^{(i)}(l)P^{(i)}_l \leq \overline{P}_i \right\},
\end{equation}
where $\pi^{(i)}$ is the probability distribution of $H_{ii}(t)$.  We note that user $i$ has an average power constraint $\overline{P_i}$ for each feasible action.  As the set of power levels of each user is finite, the number of elements in $\mathcal{A}^{(i)}$ is also finite.   Let the cardinality of $\mathcal{A}^{(i)}$ be $\vert \mathcal{A}^{(i)} \vert = L_i$.  We enumerate the elements of $\mathcal{A}^{(i)}$ as $\{1, 2, \dots, L_i\}$, i.e., when we write $k^{(i)} \in \mathcal{A}^{(i)}$, we mean $k^{(i)}$ is a feasible power policy of user $i$ for $k^{(i)} = 1,\dots,L_i$, and $k^{(i)}(h), h \in \mathcal{H}_d^{(i)}$ denotes the power used when the direct link channel gain is $h$ under the policy $k^{(i)}$.  We denote the action space of the set of all users by $\mathcal{A} = \mathcal{A}^{(1)} \times \dots \times \mathcal{A}^{(N)}$, and an action profile of all users by $k = (k^{(1)}, \dots, k^{(N)})$.  We denote the action space of all users other than user $i$ by $\mathcal{A}^{(-i)} = \mathcal{A}^{(1)} \times \dots \times \mathcal{A}^{(i-1)} \times \mathcal{A}^{(i+1)} \times \dots \times \mathcal{A}^{(N)}$ and the action profile of all users other than $i$ by $k^{(-i)} \in \mathcal{A}^{(-i)}$.  Let $k^{(i)}_{t}$ indicate the action of user $i$ at time $t$.  Also, $k_t = (\kit{k},i=1,\dots,N)$.  A strategy $\phi_i$ of user $i$ is a probability distribution on $\mathcal{A}^{(i)}$, and a pure strategy is a degenerate probability distribution where a certain action is chosen with probability one.  \par
In a given time slot $t$, each user chooses an action that maximizes its probability of successful transmission.  The strategy of each user influences the probability of successful transmission of every user and hence we are interested in finding an equilibrium point.  To model this as a game, we define the reward of user $i$ for a given action profile $(k^{(i)},k^{(-i)})$ in time slot $t$ with direct link channel gain $h_{ii}(t)$ as 
\begin{equation}\label{def_w}
\kit{w}(\kit{k};\kitt{h}) = 
\begin{cases}
1, \text{ if transmitter } i \text{ receives ACK}, \\
0, \text{ else}.
\end{cases}
\end{equation}
This reward of user $i$ in a given time slot $t$ is random as it depends on the cross link channel gains $h_{ij}(t)$ and the power levels $k^{(-i)}_t$ at which the other users transmit in that slot, which in turn depend on the direct link channel gains of those users.  The average reward of user $i$ for the action profile $(k^{(i)},k^{(-i)})$ is 
\begin{equation}
u^{(i)}(k^{(i)},k^{(-i)}) = \limsup\limits_{T \rightarrow \infty} \frac{1}{T} \sum_{t = 1}^T w_{t}^{(i)}(k^{(i)};\kitt{H}).
\end{equation}
By the strong law of large numbers this limit exists $a.s.$ and average reward $u^{(i)}(k^{(i)},k^{(-i)})$ can be interpreted as the probability of successful transmission.  The average reward of user $i$ given the \emph{mixed} strategy $(\phi_i, \phi_{-i})$ is 
\begin{equation}
u^{(i)}(\phi_i, \phi_{-i}) = \sum_{k \in \mathcal{A}} \left[\Pi_{j=1}^N\phi_j(k^{(j)})\right] u_i(k^{(i)},k^{(-i)}).
\end{equation}
Each player aims to maximize its own average reward or probability of successful transmission and we model this scenario as a stochastic game and restrict ourselves to stationary policies.  Then the utility of user $i$ can be written as 
\begin{equation}
u^{(i)}(k^{(i)},k^{(-i)}) = \mathbb{E}\left[ w_{t}^{(i)}(k^{(i)};H_{ij})\right],\label{ex_ut}
\end{equation}
where the expectation is with respect to the distribution of random variables $H_{ij}$ for all $i,j$.  Thus the stochastic game is equivalent to the one-shot game in which user $i$ maximizes its utility defined in (\ref{ex_ut}) and the set of correlated equilibria (CE), defined below, for both of these games is same.
\begin{mydef}
  Given a strategic game $\mathcal{G}$, a joint probability distribution $\phi(k)$ is said to be a {\bf correlated equilibrium} if for all $i = 1,\dots,N$, $\ki{k},\ki{\hat{k}} \in \ki{\mathcal{A}}$ and $k^{(-i)} \in \mathcal{A}^{(-i)}$, we have
  \begin{equation}
    \sum_{k^{(-i)} \in \mathcal{A}^{(-i)}} \phi(k)\left[ u_i(\hat{k}^{(i)},k^{(-i)}) - u_i(\ki{k},k^{(-i)})\right] \leq 0.\label{def_ce}
  \end{equation}
\end{mydef}
We get a correlated $\epsilon$-equilibrium if the zero in the above definition is replaced by $\epsilon$.\par
To find a CE of a one-shot game, a regret matching algorithm is proposed in \cite{r_ce}.  Regret for user $i$ is defined in terms of utility $u^{(i)}$ and it is assumed that the functional form of utility $u^{(i)}$ is known to user $i$.  In this paper, we assume that user $i$ is not aware of functional form of $u^{(i)}$ but knows $w_{t}^{(i)}(k^{(i)};\kitt{h})$ for each $t$ at the end of slot $t$.  We define regret in terms of $w_{t}^{(i)}(k^{(i)};\kitt{H})$ and this definition is equivalent to the definition of regret in \cite{r_ce} because,
\begin{eqnarray}
u^{(i)}(k^{(i)},\phi_{-i}) & = & \sum_{k^{(-i)} \in \mathcal{A}_{-i}}\phi_{-i}(k^{(-i)}) u_i(k^{(i)},k^{(-i)}),\nonumber\\
& = & \sum_{k^{(-i)} \in \mathcal{A}_{-i}} \phi_{-i}(k^{(-i)}) \limsup\limits_{T \rightarrow \infty} \frac{1}{T} \sum_{t = 1,k_{t}^{(-i)} = k^{(-i)}}^T w_{t}^{(i)}(k_t^{(i)};\kitt{H}),\label{x_bf}\\
& = & \limsup\limits_{T \rightarrow \infty} \frac{1}{T} \sum_{k^{(-i)} \in \mathcal{A}_{-i}} \phi_{-i}(k^{(-i)}) \sum_{t = 1}^T w_{t}^{(i)}(k_t^{(i)};\kitt{H}),\label{xchange}\\
& = & \limsup\limits_{T \rightarrow \infty} \frac{1}{T} \sum_{t = 1}^T w_{t}^{(i)}(k_t^{(i)};\kitt{H}).\nonumber
\end{eqnarray}
Note that as the first summation in (\ref{x_bf}) is finite, we can exchange summation and limit to get (\ref{xchange}).  Thus, 
\begin{equation}
u^{(i)}(\ki{k},\hat{k}^{(i)},\phi_{-i}) - u^{(i)}(k^{(i)},\phi_{-i}) = \limsup\limits_{T \rightarrow \infty} \frac{1}{T} \sum_{t = 1}^T \left[w_{t}^{(i)}(\ki{k},\hat{k}^{(i)};\kitt{h}) - w_{t}^{(i)}(k^{(i)};\kitt{h})\right],\label{reg}
\end{equation}
where $u^{(i)}(\ki{k},\hat{k}^{(i)},\phi_{-i})$ is the utility of user $i$, that it would have received if the action $\ki{k}$ is replaced by $\ki{\hat{k}}$ whenever it is played.  Similarly, we write $w_{t}^{(i)}(\ki{k},\hat{k}^{(i)};\kitt{H})$ to denote the reward of user $i$, by playing $\ki{\hat{k}}$ instead of $\ki{k}$.  We use the difference $X_{t}(\hat{k}^{(i)}) = w_{t}^{(i)}(\ki{k},\hat{k}^{(i)};\kitt{H}) - w_{t}^{(i)}(k^{(i)};\kitt{H})$ to define regret. \par
Thus the regret-matching algorithm in \cite{r_ce} can be described as follows. The regret, 
\begin{equation}
R_T(k^{(i)},\hat{k}^{(i)})  =  \max \left\{0, \frac{1}{T}\sum_{t=1}^TA_{t}(k^{(i)},\hat{k}^{(i)})\right\},\label{regret_hm}
\end{equation}
where
\begin{equation}
A_{t}(k^{(i)},\hat{k}^{(i)}) = 
\begin{cases}
X_{t}(\hat{k}^{(i)}), \text{ if }\kit{k} = k^{(i)},\\
0, \text{ else},
\end{cases}
\end{equation}
\begin{equation}
\kit{w}(\hat{k}^{(i)};\kitt{H}) = 
\begin{cases}
1, \text{ if } \kit{\Gamma} \leq \gamma^{(i)}(\hat{k}^{(i)};\kitt{H}),\\
0, \text{ else},
\end{cases}
\label{inst_reg}
\end{equation}
and $\Gamma^{(i)}_{t}$ denotes the actual interference experienced by user $i$ at time $t$ and $\gamma^{(i)}(\kit{k};\kitt{h})$ is the upper bound on interference for successful transmission of user $i$, and it is given by
\begin{equation*}
\gamma^{(i)}(\kit{k};\kitt{h}) = \frac{|\kitt{h}|^2\kit{k}(\kitt{h})}{2^{r_i} -1}-1.
\end{equation*}\par
For every pair of pure strategies $\ki{k}, \ki{\hat{k}} \in \ki{\mathcal{A}}$, the regret $R_T(k^{(i)},\hat{k}^{(i)})$ of a user is a nonnegative real number that reflects the change in utility received by the user if the choice of pure strategy $\ki{k}$ is replaced with $\ki{\hat{k}}$ at every time instance that the user chose to play $\ki{k}$ upto time $T$.  User $i$ chooses a pure strategy according to a probability distribution in which the probability of choosing an action is proportional to its regret.  Thus, user $i$ chooses a pure strategy according to the distribution
\begin{equation}
\phi_{T+1}^{(i)}(\hat{k}^{(i)}) = 
\begin{cases}
\frac{1}{\mu}R_T(k^{(i)},\hat{k}^{(i)}), \text{ for } \hat{k}^{(i)} \neq k^{(i)}, \\
1- \sum_{l\neq k^{(i)}}\frac{1}{\mu}R_T(k^{(i)},l), \text{ if } \hat{k}^{(i)} = k^{(i)},\label{strgy1}
\end{cases}
\end{equation}
for a sufficiently large $\mu$.\par
It is shown in \cite{r_ce} that following the above procedure, the empirical frequencies of actions converge to the set of correlated equilibria.\par
To implement this algorithm each user $i$ not only needs to know its own actions (transmit powers) but also of the other users' actions and its cross link channel gains (to compute regret), which it does not know.  Thus, in the next section we modify this algorithm to find a correlated equilibrium where each user updates its strategy based on the rewards it received and the actions it chose in the past.\par
\section{Learning Algorithm to find a Correlated Equilibrium}\label{l_ce}
The learning algorithm we propose is fully distributed in the sense that every user updates its strategy based on its own actions and rewards and independently of the other users' strategies and rewards.  We will show that the joint empirical distribution converges to the set of correlated equilibria with probability 1 for our algorithm.\par
In our problem as the transmitter is not aware of the interference at the corresponding receiver, it can not find the regrets as in (\ref{regret_hm}).  Therefore, each transmitter estimates regret by estimating the instantaneous reward based on the feedback it has received.  The estimated reward is a function of the strategy $\ki{\hat{k}}$, with respect to which we want to find the regret for not using $\ki{\hat{k}}$ instead of $\ki{k}$.\par
If $\kit{k}$ is the pure strategy that is actually chosen at a time $t$ and $\kitt{h}$ is the direct link channel gain at that time, then the actual interference perceived by its receiver is less than the threshold $\gamma^{(i)}(\kit{k};\kitt{h})$ whenever the communication is successful and it is greater than $\gamma^{(i)}(\kit{k};\kitt{h})$ whenever the communication is a failure.  The user $i$ is optimistic in estimating the rewards for using $\ki{\hat{k}}$ instead of $\ki{k}$.  To formally define the estimated reward we use the following notation.  For each $k^{(i)} \in \mathcal{A}^{(i)}$ and $h_{ii} \in \mathcal{H}_d^{(i)}$,
\begin{eqnarray*}
\mathcal{S}^{(i)}_L(t) & = & \{\hat{k}^{(i)} \in \mathcal{A}^{(i)}: \gamma^{(i)}(\kit{k};\kitt{h}) < \gamma^{(i)}(\hat{k}^{(i)};\kitt{h})\},\\
\mathcal{S}^{(i)}_G(t) & = & \{\hat{k}^{(i)} \in \mathcal{A}^{(i)}: \gamma^{(i)}(\kit{k};\kitt{h}) > \gamma^{(i)}(\hat{k}^{(i)};\kitt{h})\},\\
\mathcal{S}^{(i)}_E(t) & = & \{\hat{k}^{(i)} \in \mathcal{A}^{(i)}: \gamma^{(i)}(\kit{k};\kitt{h}) = \gamma^{(i)}(\hat{k}^{(i)};\kitt{h}\}.
\end{eqnarray*}
User $i$ finds the instantaneous reward $\kit{\tilde{w}}(\ki{k},\hat{k}^{(i)};\kitt{h})$ that could have been if user $i$ had used action $\hat{k}^{(i)}$ instead of $\kit{k}$ at time $t$, as
\begin{equation}
  \kit{\tilde{w}}(\ki{k},\hat{k}^{(i)};\kitt{h}) = 
  \begin{cases}
    1, \text{ if }\kit{w}(\kit{k};\kitt{h}) = 1, \hspace{2.0cm} \hat{k}^{(i)} \in \mathcal{S}^{(i)}_L(t)\cup \mathcal{S}^{(i)}_E(t)\\
    1, \text{ if }\kit{w}(\kit{k};\kitt{h}) = 0, \hspace{2.0cm} \hat{k}^{(i)} \in \mathcal{S}^{(i)}_L(t), \\
    1, \text{ if }\kit{w}(\kit{k};\kitt{h}) = 1, \hspace{2.0cm} \hat{k}^{(i)} \in \mathcal{S}^{(i)}_G(t),\\
    0, \text{ if }\kit{w}(\kit{k};\kitt{h}) = 0, \hspace{2.0cm}  \hat{k}^{(i)} \in \mathcal{S}^{(i)}_G(t)\cup \mathcal{S}^{(i)}_E(t).
  \end{cases}\label{est_reward}
\end{equation} \par  
For every pair of actions $k^{(i)} \text{ and }\hat{k}^{(i)}$, after $T$ slots, the regret $\tilde{R}_T(k^{(i)},\hat{k}^{(i)})$ is 
\begin{equation}
\max\left\{0, \frac{1}{T}\sum_{t=1}^T\tilde{A}_{t}(k^{(i)},\hat{k}^{(i)})\right\},\label{est_regret}
\end{equation}
where
\begin{equation}
\tilde{A}_{t}(k^{(i)},\hat{k}^{(i)}) = 
\begin{cases}
\tilde{X}_{t}(\hat{k}^{(i)}), \text{ if }\kit{k} = k^{(i)},\\
0, \text{ else};
\end{cases}
\end{equation}
\begin{equation}
 \tilde{X}_{t}(\hat{k}^{(i)}) = \kit{\tilde{w}}(\ki{k},\hat{k}^{(i)};\kitt{h}) - \kit{\tilde{w}}(\kit{k};\kitt{h}),\label{proc_1}
\end{equation}
and $\kit{\tilde{w}}(\ki{k},\kit{k};\kitt{h}) = \kit{w}(\kit{k};\kitt{h})$, the actual reward received by the user.  If $k_T^{(i)} = k^{(i)}$, i.e., $k^{(i)}$ is the action chosen by user $i$ at time instant $T$, then an action $\hat{k}^{(i)} \in \mathcal{A}^{(i)}$ in time slot $T+1$ is chosen with probability,
\begin{equation}
\phi_{T+1}^{(i)}(\hat{k}^{(i)}) = 
\begin{cases}
\frac{1}{\mu}\tilde{R}_T(k^{(i)},\hat{k}^{(i)}), \text{ for } \hat{k}^{(i)} \neq k^{(i)}, \\
1- \sum_{l\neq k^{(i)}}\frac{1}{\mu}\tilde{R}_T(k^{(i)},l), \text{ if } \hat{k}^{(i)} = k^{(i)}.\label{strgy}
\end{cases}
\end{equation}
It should be noted that the quantities $\kit{\tilde{w}}, \tilde{R}_T$ are the \emph{estimated} values of reward and regret.\par
We define the empirical frequency of strategies chosen upto time $T$ as 
\begin{equation}
f_T(s) = \frac{1}{T}\vert \left\{t \leq T:k_{t} = \hat{k}\right\} \vert, \hat{k} \in \mathcal{A}.
\end{equation}
It is shown in \cite{r_ce} that the empirical frequency of strategies converges to the set of the correlated $\epsilon$-equilibria if and only if the actual regret $R_T(k^{(i)},\hat{k}^{(i)})$ converges to zero as $T \rightarrow \infty$.  This can be formally stated as
\begin{proposition}\label{prop_ce}
Let $\{k_{t}\}, t = 1, 2,\dots$ be a sequence of actions chosen by the users.  For any $\epsilon \geq 0$, $\limsup\limits_{T \rightarrow \infty}R_T(k^{(i)},\hat{k}^{(i)}) \leq \epsilon$ for each user $i$ and every $k^{(i)}, \hat{k}^{(i)} \in \mathcal{A}^{(i)}$ with $k^{(i)}\neq \hat{k}^{(i)}$, if and only if the sequence of empirical frequencies $f_T$ converges to the set of correlated $\epsilon$-equilibrium almost surely.\qed
\end{proposition}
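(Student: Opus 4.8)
The plan is to reduce the proposition to a single almost-sure convergence identity that ties the time-averaged regret summand to a continuous linear functional of the empirical frequency $f_T$, and then to exploit compactness of the probability simplex together with closedness of the set of correlated $\epsilon$-equilibria to pass between the regret statement and set convergence. First I would fix a user $i$ and a pair $\ki{k}\neq\ki{\hat{k}}$ and rewrite $\frac{1}{T}\sum_{t=1}^T A_{t}(k^{(i)},\hat{k}^{(i)})$ by grouping the time indices according to the action profile $k^{(-i)}$ of the remaining users, since $A_t$ is nonzero only when $\kit{k}=k^{(i)}$. Writing $D_t = \kit{w}(\hat{k}^{(i)};\kitt{H}) - \kit{w}(k^{(i)};\kitt{H})$ for the (coupled) instantaneous reward difference, this gives
\begin{equation*}
\frac{1}{T}\sum_{t=1}^T A_{t}(k^{(i)},\hat{k}^{(i)}) = \sum_{k^{(-i)} \in \mc{A}^{(-i)}} \frac{1}{T}\sum_{t:\,k_t = (k^{(i)},k^{(-i)})} D_t ,
\end{equation*}
and the target of this step is to show that the right-hand side converges almost surely to $g(f_T) := \sum_{k^{(-i)}} f_T(k^{(i)},k^{(-i)})\left[u_i(\hat{k}^{(i)},k^{(-i)}) - u_i(k^{(i)},k^{(-i)})\right]$, i.e. that $\frac{1}{T}\sum_{t}A_t - g(f_T)\to 0$ a.s.

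The hard part will be this convergence, because the chosen action profiles $k_t$ are not i.i.d.: whatever procedure generates them, $k_t$ is determined by information available before the slot-$t$ channel gains are realized, so it is not independent across $t$ and the ordinary strong law does not apply directly. The route I would take is a martingale argument. Let $\mc{F}_{t-1}$ be the $\sigma$-field generated by the history up to the start of slot $t$; then $k_t$ is $\mc{F}_{t-1}$-measurable while the gains $H_{ij}(t)$ are i.i.d. and independent of $\mc{F}_{t-1}$, so by (\ref{ex_ut}) we have $\mb{E}[D_t \mid \mc{F}_{t-1}] = u_i(\hat{k}^{(i)},k_t^{(-i)}) - u_i(k^{(i)},k_t^{(-i)})$. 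The centered increments $D_t - \mb{E}[D_t\mid\mc{F}_{t-1}]$, restricted to the slots with $\kit{k}=k^{(i)}$, form a uniformly bounded martingale-difference sequence (rewards lie in $\{0,1\}$), so their normalized partial sums vanish a.s. by the martingale strong law. Replacing each $D_t$ by its conditional mean and counting the empirical frequency of each profile then produces exactly $g(f_T)$, establishing the identity; since $u_i$ is bounded, $g$ is a continuous (indeed linear) function of the frequency vector. There are only finitely many indices $(i,k^{(i)},\hat{k}^{(i)})$, so the identity holds for all of them simultaneously on an almost-sure event.

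With the identity in hand both implications follow from soft arguments, and I would spell out both directions, writing $\mc{C}_\epsilon$ for the closed convex set of correlated $\epsilon$-equilibria defined by (\ref{def_ce}) with $0$ replaced by $\epsilon$. For the forward direction, $R_T \geq \frac{1}{T}\sum_t A_t$ gives $\limsup_T \frac{1}{T}\sum_t A_t \leq \limsup_T R_T \leq \epsilon$ for every index; taking any convergent subsequence $f_{T_n}\to f^*$ (possible since the simplex on $\mc{A}$ is compact) and using continuity of $g$ yields $g(f^*) = \lim_n \frac{1}{T_n}\sum_t A_t \leq \epsilon$, so $f^*$ satisfies every $\epsilon$-CE inequality and hence $f^* \in \mc{C}_\epsilon$. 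As every subsequential limit lies in the closed set $\mc{C}_\epsilon$ and the ambient space is compact, $\operatorname{dist}(f_T,\mc{C}_\epsilon)\to 0$ a.s. Conversely, if $f_T \to \mc{C}_\epsilon$, then because $g\leq\epsilon$ on $\mc{C}_\epsilon$ and $g$ is continuous on the compact simplex, for each $\delta>0$ there is a neighborhood of $\mc{C}_\epsilon$ on which $g<\epsilon+\delta$; since $f_T$ eventually enters it, $\limsup_T g(f_T)\leq\epsilon$, and the identity together with $\limsup_T\max\{0,x_T\}=\max\{0,\limsup_T x_T\}$ and $\epsilon\geq 0$ gives $\limsup_T R_T = \max\{0,\limsup_T g(f_T)\}\leq\epsilon$. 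I expect the only genuine work to be the martingale convergence step; the remainder is bookkeeping with the linear functional $g$ and elementary topology of convergence to a closed set.
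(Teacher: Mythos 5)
Your argument is correct, but it is worth being clear about what the paper itself does here: Proposition \ref{prop_ce} is stated with a closing box and no proof, the text attributing it to Hart and Mas-Colell \cite{r_ce} (their Proposition 2.2). So there is no in-paper derivation to match; your write-up supplies one, and it is sound. The comparison with \cite{r_ce} is instructive. There, regret is built from the deterministic utilities $u_i$ evaluated at the realized action profiles, so the identity between $\frac{1}{T}\sum_{t}A_{t}(k^{(i)},\hat{k}^{(i)})$ and the linear functional $g(f_T)=\sum_{k^{(-i)}}f_T(k^{(i)},k^{(-i)})\bigl[u_i(\hat{k}^{(i)},k^{(-i)})-u_i(k^{(i)},k^{(-i)})\bigr]$ is an exact algebraic identity, and the equivalence with convergence to the CE set is pure compactness--continuity bookkeeping. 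In the present paper $R_T$ is built from the realized random rewards $w_{t}^{(i)}$, which depend on the slot-by-slot channel gains, so that identity holds only up to an almost-surely vanishing error; this is precisely the gap the bare citation leaves open, and your martingale step (conditioning on the pre-slot history, under which $k_t$ is measurable while the i.i.d.\ gains $H_{ij}(t)$ are independent of it, then applying the martingale SLLN to the bounded centered increments) is the right way to close it. It is a rigorous version of the informal SLLN/limit-exchange computation the paper sketches in (\ref{x_bf})--(\ref{xchange}). The two soft directions --- subsequential limits landing in the closed convex set of $\epsilon$-CE for the forward implication, and the open-neighborhood argument plus $\limsup_T\max\{0,x_T\}=\max\{0,\limsup_T x_T\}$ for the converse --- coincide with the standard proof in \cite{r_ce}. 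The only cosmetic caveat is that $\lim_n \frac{1}{T_n}\sum_t A_t$ should be justified as existing along the chosen subsequence via $g(f_{T_n})\to g(f^*)$ and the vanishing error term, which your identity already provides.
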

In Proposition \ref{est_act}, we prove that if the estimated regret $ \tilde{R}_T(k^{(i)},\hat{k}^{(i)}) $ converges to zero then the actual regret $R_T(k^{(i)},\hat{k}^{(i)})$ also converges to zero.
\begin{proposition}\label{est_act}
Let $\{k_{t}\}, t = 1, 2,\dots$ be a sequence of actions chosen by the users.  For each user $i$ and every $k^{(i)},\hat{k}^{(i)} \in \mathcal{A}^{(i)}$ with $k^{(i)}\neq \hat{k}^{(i)}$, if $\lim\limits_{T \rightarrow \infty} \tilde{R}_T(k^{(i)},\hat{k}^{(i)}) = 0,$ then $ \lim\limits_{T \rightarrow \infty} R_T(k^{(i)},\hat{k}^{(i)}) = 0.$
\end{proposition}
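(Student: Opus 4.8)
The plan is to show that the estimated regret dominates the actual regret slot-by-slot, so that $0 \leq R_T(k^{(i)},\hat{k}^{(i)}) \leq \tilde{R}_T(k^{(i)},\hat{k}^{(i)})$ for every $T$, and then finish by a squeeze argument. The central observation is that the interference $\Gamma^{(i)}_{t}$ seen at receiver $i$ in slot $t$ is produced entirely by the transmit powers and cross-link gains of the \emph{other} users, so it does not depend on user $i$'s own action. Consequently the true counterfactual reward from having played $\hat{k}^{(i)}$ instead of $\kit{k}$ is the genuine indicator $w_{t}^{(i)}(\ki{k},\hat{k}^{(i)};\kitt{h}) = \mathbf{1}\{\Gamma^{(i)}_{t} \leq \gamma^{(i)}(\hat{k}^{(i)};\kitt{h})\}$, evaluated at the \emph{same} $\Gamma^{(i)}_{t}$ as in the observed slot. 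This is what makes the estimate in (\ref{est_reward}) comparable to the actual reward at all.

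First I would establish the pointwise inequality $\kit{\tilde{w}}(\ki{k},\hat{k}^{(i)};\kitt{h}) \geq w_{t}^{(i)}(\ki{k},\hat{k}^{(i)};\kitt{h})$ by running through the four branches of (\ref{est_reward}). The observed outcome pins $\Gamma^{(i)}_{t}$ to one side of the played threshold: a success ($\kit{w}(\kit{k};\kitt{h})=1$) means $\Gamma^{(i)}_{t} \leq \gamma^{(i)}(\kit{k};\kitt{h})$, and a failure means the reverse strict inequality. Combining this with the definitions of $\mathcal{S}^{(i)}_L(t)$, $\mathcal{S}^{(i)}_G(t)$, $\mathcal{S}^{(i)}_E(t)$ shows that in the two ``certain'' branches (success with $\hat{k}^{(i)} \in \mathcal{S}^{(i)}_L(t)\cup\mathcal{S}^{(i)}_E(t)$, and failure with $\hat{k}^{(i)} \in \mathcal{S}^{(i)}_G(t)\cup\mathcal{S}^{(i)}_E(t)$) the threshold ordering forces the true counterfactual reward, and the estimate equals it exactly; in the two ``optimistic'' branches the estimate is set to $1$ while the true reward is undetermined by the observation, so the estimate can only overshoot. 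Since the reward of the action actually played is observed exactly, $\kit{\tilde{w}}(\ki{k},\kit{k};\kitt{h}) = \kit{w}(\kit{k};\kitt{h})$, subtracting yields $\tilde{X}_{t}(\hat{k}^{(i)}) \geq X_{t}(\hat{k}^{(i)})$ in every slot, and hence $\tilde{A}_{t}(k^{(i)},\hat{k}^{(i)}) \geq A_{t}(k^{(i)},\hat{k}^{(i)})$ (both sides vanishing when $\kit{k}\neq k^{(i)}$).

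Averaging over $t$ preserves this inequality, and since $x \mapsto \max\{0,x\}$ is nondecreasing, $\tilde{R}_T(k^{(i)},\hat{k}^{(i)}) \geq R_T(k^{(i)},\hat{k}^{(i)})$ for every $T$; moreover $R_T(k^{(i)},\hat{k}^{(i)}) \geq 0$ by construction. Thus $0 \leq R_T(k^{(i)},\hat{k}^{(i)}) \leq \tilde{R}_T(k^{(i)},\hat{k}^{(i)})$, and the hypothesis $\tilde{R}_T \to 0$ forces $R_T \to 0$.

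I expect the only real content to lie in the case check of the first step; the algebra reduces to comparing $\Gamma^{(i)}_{t}$ against the two thresholds $\gamma^{(i)}(\kit{k};\kitt{h})$ and $\gamma^{(i)}(\hat{k}^{(i)};\kitt{h})$. The slightly delicate point is confirming that the optimistic assignment of $1$ in the uncertain branches is one-sided in the correct direction, so that $\tilde{w}$ \emph{upper-bounds} rather than merely approximates $w$; this is exactly what turns $\tilde{R}_T$ into an upper bound for $R_T$. Once that monotonicity is in hand, the passage to the limit is immediate and requires no probabilistic argument, since the domination holds deterministically along any realized action sequence $\{k_{t}\}$.
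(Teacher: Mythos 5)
Your proposal is correct and follows essentially the same route as the paper's proof: establish the pointwise domination $\kit{\tilde{w}}(\ki{k},\hat{k}^{(i)};\kitt{h}) \geq \kit{w}(\hat{k}^{(i)};\kitt{h})$ by a case check on the ordering of the two thresholds, deduce $0 \leq R_T \leq \tilde{R}_T$, and squeeze. Your organization by the four branches of (\ref{est_reward}) is just a reshuffling of the paper's two-case analysis, and your explicit remark that $\Gamma^{(i)}_{t}$ is independent of user $i$'s own action is a helpful (implicitly assumed) clarification rather than a departure.
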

\begin{proof}
To prove the proposition, we consider all $t$ such that $\kit{k} = k^{(i)}$ and prove that 
\begin{equation}
\kit{\tilde{w}}(\ki{k},\hat{k}^{(i)};\kitt{h}) \geq \kit{w}(\hat{k}^{(i)};\kitt{h}),\label{step1}
\end{equation}
for any given $k^{(i)},\hat{k}^{(i)} \in \mathcal{A}^{(i)},$ $k^{(i)}\neq \hat{k}^{(i)}$ and given channel gain $\kitt{h}$.  Here, we note that, if $\gamma^{(i)}(k^{(i)};\kitt{h}) = \gamma^{(i)}(\hat{k}^{(i)};\kitt{h})$ then $\kit{\tilde{w}}(\ki{k},\hat{k}^{(i)};\kitt{h}) = \kit{w}(\hat{k}^{(i)};\kitt{h})$.  Therefore (\ref{step1}) is satisfied and hence in the following we consider only the cases where $\gamma^{(i)}(k^{(i)};\kitt{h}) \neq \gamma^{(i)}(\hat{k}^{(i)};\kitt{h})$. \par
We now consider two cases separately: \\
{\it Case} $1$ : $\gamma^{i}(\kit{k};\kitt{h}) < \gamma^{i}(\hat{k}^{(i)};\kitt{h}).$\\ 
In this case, it should be noted that $\kit{\tilde{w}}(\ki{k},\hat{k}^{(i)};\kitt{h}) = 1$ and as $\kit{w}(\hat{k}^{(i)};\kitt{h})$ can be either $0$ or $1$, (\ref{step1}) always holds.\\
{\it Case} $2$ : $\gamma^{i}(\kit{k};\kitt{h}) > \gamma^{i}(\hat{k}^{(i)};\kitt{h}).$\\
In this case, if $\kit{w}(\kit{k};\kitt{h}) = 1$, then by definition $\kit{\tilde{w}}(\ki{k},\hat{k}^{(i)};\kitt{h}) = 1$ and (\ref{step1}) always holds.  If $\kit{w}(\kit{k};\kitt{h}) = 0$, then $\kit{\Gamma} > \gamma^{(i)}(\kit{k};\kitt{h})$ and hence $\kit{\Gamma} > \gamma^{(i)}(\hat{k}^{(i)};\kitt{h})$.  Therefore we have $\kit{w}(\hat{k}^{(i)};\kitt{h}) = 0$ and (\ref{step1}) is satisfied with equality.\par
Hence, (\ref{step1}) always holds and we have $X_{t}(k^{(i)},\hat{k}^{(i)}) \leq \tilde{X}_{t}(k^{(i)},\hat{k}^{(i)})$.  Therefore, $0 \leq R_T(k^{(i)},\hat{k}^{(i)}) \leq \tilde{R}_T(k^{(i)},\hat{k}^{(i)})$, and, if $\tilde{R}_T(k^{(i)},\hat{k}^{(i)})$ converges to zero as $T$ approaches infinity, then $R_T(k^{(i)},\hat{k}^{(i)})$ also converges to zero.
\end{proof}
In \cite{adapt}, authors have extended the regret-matching algorithm of \cite{r_ce} so that one can use a function of the regret in the original procedure instead of regret, where the function satisfies certain conditions.  We can not use that result here, as our estimation does not satisfy the conditions on the function.  But we can generalize the result in \cite{r_ce}.
\begin{theorem}
Let the actual regret be defined as in (\ref{regret_hm}).  Let the actual utility $\kit{w}(\hat{k}^{(i)};\kitt{H})$ in (\ref{inst_reg}) be replaced by an estimated utility $\kit{\tilde{w}}(\ki{k},\hat{k}^{(i)};\kitt{H})$ such that
\begin{equation*}
\kit{\tilde{w}}(\ki{k},\hat{k}^{(i)};\kitt{H}) \geq \kit{w}(\hat{k}^{(i)};\kitt{H}).
\end{equation*}
Then following the regret-matching algorithm (\ref{regret_hm})-(\ref{strgy1}) with the actual regret $R_T(k^{(i)},\hat{k}^{(i)})$ replaced by the estimated regret $\tilde{R}_T(k^{(i)},\hat{k}^{(i)})$, the estimated regret $\tilde{R}_T(k^{(i)},\hat{k}^{(i)})$ converges to zero as $T$ approaches infinity.
\end{theorem}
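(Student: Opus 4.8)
The plan is to reproduce, for the \emph{estimated} regret, the approachability argument that underlies the convergence result of \cite{r_ce} invoked in Proposition~\ref{prop_ce}, and then to isolate the single place where the estimation enters. Throughout I fix a user $i$, write $L_i=|\mathcal{A}^{(i)}|$, let $\mathcal{F}_T$ denote the history up to slot $T$, and collect the signed regret averages into a vector $\tilde D_T=(\tilde D_T(k^{(i)},\hat k^{(i)}))_{k^{(i)}\neq\hat k^{(i)}}$, where $\tilde D_T(k^{(i)},\hat k^{(i)})=\frac1T\sum_{t=1}^T\tilde A_t(k^{(i)},\hat k^{(i)})$ is the quantity inside the $\max\{0,\cdot\}$ of (\ref{est_regret}), so that $\tilde R_T=[\,\tilde D_T\,]_+$ coordinatewise. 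First I would record that the estimated increments are uniformly bounded: since $\tilde w$ and $w$ take values in $\{0,1\}$, each $\tilde A_t$ lies in $[-1,1]$, and this is all the boundedness the machinery needs. The target set is the nonpositive orthant, and the natural potential is its squared Euclidean distance $\Phi_T=\sum_{k^{(i)}\neq\hat k^{(i)}}\big([\tilde D_T(k^{(i)},\hat k^{(i)})]_+\big)^2=\|\tilde R_T\|^2$; the claim is exactly that $\Phi_T\to0$.

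Next I would derive a one-step contraction. From $(T+1)\tilde D_{T+1}=T\tilde D_T+\tilde A_{T+1}$ and the elementary inequality $[x+\delta]_+^2\le[x]_+^2+2[x]_+\delta+\delta^2$ applied coordinatewise, together with $[\tilde D_T]_+\!\cdot\tilde D_T=\|\tilde R_T\|^2$, taking $\mathbb{E}[\cdot\mid\mathcal{F}_T]$ yields a bound of the form $\mathbb{E}[\Phi_{T+1}\mid\mathcal{F}_T]\le\big(1-\tfrac{2}{T+1}\big)\Phi_T+\tfrac{2}{T+1}\langle\tilde R_T,\mathbb{E}[\tilde A_{T+1}\mid\mathcal{F}_T]\rangle+\tfrac{C}{(T+1)^2}$, with $C$ depending only on $L_i$. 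The crucial quantity is the inner product. Because the slot-$(T+1)$ action is drawn from (\ref{strgy}), $\mathbb{E}[\tilde A_{T+1}(k^{(i)},\hat k^{(i)})\mid\mathcal{F}_T]=\phi_{T+1}^{(i)}(k^{(i)})\,\tilde g(k^{(i)},\hat k^{(i)})$ with $\phi_{T+1}^{(i)}(k^{(i)})=\frac1\mu\tilde R_T(k_T^{(i)},k^{(i)})$ and $|\tilde g|\le1$; hence Cauchy--Schwarz gives $\langle\tilde R_T,\mathbb{E}[\tilde A_{T+1}\mid\mathcal{F}_T]\rangle\le\frac{\sqrt{L_i}}{\mu}\Phi_T$. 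Choosing $\mu$ large enough that $\mu>\sqrt{L_i}$ (which also keeps (\ref{strgy}) a genuine probability distribution) gives $\mathbb{E}[\Phi_{T+1}\mid\mathcal{F}_T]\le\big(1-\tfrac{2c}{T+1}\big)\Phi_T+\tfrac{C}{(T+1)^2}$ with $c=1-\sqrt{L_i}/\mu>0$. This is a Robbins--Siegmund recursion: since $\sum_T(T+1)^{-2}<\infty$, $\Phi_T$ converges almost surely, and since $\sum_T(T+1)^{-1}=\infty$ forces that limit to be $0$, one concludes $\Phi_T\to0$ a.s., i.e.\ $\tilde R_T(k^{(i)},\hat k^{(i)})\to0$ for every pair.

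The main obstacle is this inner-product (Blackwell) term, and it is precisely where the present setting departs from \cite{r_ce}. In \cite{r_ce} the alternative payoff is the consistent game payoff $u_i(\hat k^{(i)},k^{(-i)})$, a function of $\hat k^{(i)}$ alone, and the stationary flow-balance of the regret-matching kernel makes the analogous sum cancel exactly. Here the estimated reward $\tilde w(k^{(i)},\hat k^{(i)})$ depends also on the action $k^{(i)}$ actually played and is optimistically inflated (indeed the estimated increment $\tilde X_t$ is nonnegative, so the drift can even carry the ``wrong'' sign), and that exact cancellation is lost; this is the reason, noted in the text, that the function-of-regret extension of \cite{adapt} cannot be invoked. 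The resolution I would emphasise is that the cancellation is not needed: the $1/\mu$ weight built into (\ref{strgy}) makes the entire Blackwell term $O(\Phi_T/\mu)$, so for sufficiently large $\mu$ it is dominated by the $-2\Phi_T/(T+1)$ contraction. I would stress that this step uses only the boundedness of $\tilde w$, not the hypothesis $\tilde w\ge w$; the latter is exactly what Proposition~\ref{est_act} requires in order to pass from $\tilde R_T\to0$ to the convergence of the actual regret $R_T$.
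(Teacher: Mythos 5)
There is a genuine gap, and it sits exactly at the step you single out as the crux. You bound the Blackwell term by writing $\mathbb{E}[\tilde A_{T+1}(k^{(i)},\hat k^{(i)})\mid\mathcal{F}_T]=\phi_{T+1}^{(i)}(k^{(i)})\,\tilde g(k^{(i)},\hat k^{(i)})$ and then substituting $\phi_{T+1}^{(i)}(k^{(i)})=\frac1\mu\tilde R_T(k_T^{(i)},k^{(i)})$ for \emph{every} $k^{(i)}$. But (\ref{strgy}) assigns that $O(1/\mu)$ probability only to $k^{(i)}\neq k_T^{(i)}$; the action actually played at time $T$ is repeated with probability $1-\sum_{l\neq k_T^{(i)}}\frac1\mu\tilde R_T(k_T^{(i)},l)$, which is close to $1$ (and closer still as $\mu$ grows). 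Hence the row of $\mathbb{E}[\tilde A_{T+1}\mid\mathcal{F}_T]$ indexed by $k^{(i)}=k_T^{(i)}$ carries weight of order one, and its contribution to the inner product is of order $\sum_{\hat k^{(i)}}\tilde R_T(k_T^{(i)},\hat k^{(i)})\le\sqrt{L_i\,\Phi_T}$, not $O(\Phi_T/\mu)$. A drift inequality of the form $\mathbb{E}[\Phi_{T+1}\mid\mathcal{F}_T]\le\Phi_T+\frac{2}{T+1}\bigl(\sqrt{L_i\Phi_T}-\Phi_T\bigr)+\frac{C}{(T+1)^2}$ does not force $\Phi_T\to0$, because the $\sqrt{\Phi_T}$ term dominates the $-\Phi_T$ contraction precisely in the regime where $\Phi_T$ is small; and enlarging $\mu$ cannot repair this, since the offending self-transition weight is not controlled by $\mu$. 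Your own observation that $\tilde X_t\ge 0$ in this construction makes matters worse: the inner product is genuinely nonnegative, so there is no sign cancellation left to exploit, only the structural one you discarded.

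Consequently the claim that the Hart--Mas-Colell cancellation ``is not needed'' is exactly where the argument breaks. That cancellation --- the identity $\sum_{j\neq k}R(j,k)\,q(j)\,[\alpha(k)-\alpha(j)]=0$ when $q$ is the invariant distribution of the transition matrix with off-diagonal entries $R(j,k)/\mu$, together with the multi-period window argument showing that the play distribution approaches that invariant distribution while the matrix changes slowly --- is precisely the mechanism by which \cite{r_ce} neutralizes the heavy self-transition row that defeats your crude bound. The paper's own proof does not attempt a self-contained potential argument at all: it invokes the proof of \cite{r_ce} essentially verbatim, noting only that the history must be enlarged to include the direct-link channel states so that the estimated rewards are measurable with respect to it. If you want a self-contained argument you must reproduce the invariant-distribution/window machinery for the increments $\tilde A_t$; and there the fact that $\tilde w_t^{(i)}(k^{(i)},\hat k^{(i)};h_{ii}(t))$ depends on the played action $k^{(i)}$ as well as on the alternative $\hat k^{(i)}$ (unlike $u_i(\hat k^{(i)},k^{(-i)})$ in \cite{r_ce}) genuinely disturbs the telescoping identity --- a point worth confronting explicitly rather than bypassing.
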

Following the proof of the main theorem in \cite{r_ce}, we can show that for each $\ki{k}$ and $\ki{\hat{k}}$, the estimated regret $\tilde{R}_T(k^{(i)},\hat{k}^{(i)})$ converges to zeros as $T$ approaches infinity.  Therefore by Proposition \ref{est_act}, we get the following theorem.
\begin{theorem}
If each user chooses strategies in each time slot according to the algorithm (\ref{proc_1})-(\ref{strgy}), then the empirical frequency $f_{T}$ converges to the set of correlated $\epsilon-$equilibria for any $\epsilon > 0$.
\end{theorem}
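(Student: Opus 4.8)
The plan is to obtain the conclusion by chaining the three preceding results, read in the order: estimated regret $\to 0$, actual regret $\to 0$, empirical frequency converges. No new estimate is required, since the substantive analytic work has already been carried out in the preceding Theorem (the adaptation of the regret-matching convergence argument of \cite{r_ce} to dominating estimates) and in Proposition \ref{est_act}.

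First I would confirm that the specific optimistic estimate $\kit{\tilde{w}}(\ki{k},\hat{k}^{(i)};\kitt{H})$ defined in (\ref{est_reward}) and used to drive the algorithm (\ref{proc_1})--(\ref{strgy}) satisfies the hypothesis of the preceding Theorem, namely $\kit{\tilde{w}}(\ki{k},\hat{k}^{(i)};\kitt{H}) \geq \kit{w}(\hat{k}^{(i)};\kitt{H})$ for every pair $k^{(i)} \neq \hat{k}^{(i)}$ and every channel-gain realization. This is exactly inequality (\ref{step1}), which was established case by case in the proof of Proposition \ref{est_act} (the cases $\gamma^{(i)}(\kit{k};\kitt{h})<,=,>\gamma^{(i)}(\hat{k}^{(i)};\kitt{h})$). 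Hence the hypothesis holds for the estimate actually employed, and the preceding Theorem applies verbatim, yielding that the estimated regret $\tilde{R}_T(k^{(i)},\hat{k}^{(i)})$ converges to zero almost surely as $T \to \infty$, for each user $i$ and every pair $k^{(i)} \neq \hat{k}^{(i)}$.

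Next I would invoke Proposition \ref{est_act}. The sandwich $0 \leq R_T(k^{(i)},\hat{k}^{(i)}) \leq \tilde{R}_T(k^{(i)},\hat{k}^{(i)})$ derived there forces the \emph{actual} regret $R_T(k^{(i)},\hat{k}^{(i)})$ to converge to zero as well; in particular $\limsup\limits_{T \rightarrow \infty} R_T(k^{(i)},\hat{k}^{(i)}) = 0 \leq \epsilon$ for every $\epsilon > 0$, uniformly over all $i$ and all $k^{(i)} \neq \hat{k}^{(i)}$. Proposition \ref{prop_ce} then converts this regret bound, via its ``if'' direction, into the desired conclusion: the empirical frequency $f_T$ converges almost surely to the set of correlated $\epsilon$-equilibria for any $\epsilon > 0$, which is precisely the statement of the theorem.

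The step I expect to need the most care is the first one, identifying that the algorithm instantiates the hypotheses of the preceding Theorem. That Theorem is stated abstractly for \emph{any} estimate dominating the true reward, so one must check that the concrete optimistic rule (\ref{est_reward}) is such a dominating estimate before the abstract result may be applied; this is where the case analysis of (\ref{step1}) is doing the real work. Once that identification is in place, the remaining two implications are immediate applications of Proposition \ref{est_act} and Proposition \ref{prop_ce} respectively, requiring no further probabilistic argument beyond the almost-sure convergence already supplied by those results.
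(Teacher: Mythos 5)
Your proposal is correct and follows essentially the same route as the paper: the paper likewise obtains the result by chaining the preceding Theorem (estimated regret $\to 0$, justified by carrying over the Hart--Mas-Colell argument with history augmented by the direct channel states), Proposition \ref{est_act} (the sandwich $0 \leq R_T \leq \tilde{R}_T$, whose case analysis is exactly the verification of the dominance hypothesis you flag), and Proposition \ref{prop_ce}. No discrepancy to report.
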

In the proof of the main theorem of \cite{r_ce}, history up to time $T$ is defined as the actions chosen by all users at time instances $t = 1,\dots,T$.  To prove that the estimated regret converges to zero following the regret-matching algorithm, we just need to redefine the history up to time $T$ as the actions chosen by all users along with the direct channel states at time instances $t = 1,\dots,T$.  With this definition of history, the entire proof of the main theorem in \cite{r_ce}, carries over and we can conclude that the estimated regret converges to zero.\par
The performance of the system at a CE may not be very satisfactory from the overall system point of view.  Therefore, we also provide a distributed algorithm in the next section which achieves a Pareto point.  The Pareto points are socially optimal.
\section{Learning Coarse Correlated Equilibrium}\label{cce}
In this section, we compute a coarse correlated equilibrium which is a generalization of a correlated equilibrium.  We present the multiplicative weight (MW) algorithm \cite{m_w},\cite{plg} to compute a CCE of our power allocation game.  MW algorithm has much less computational complexity per iteration than that of the regret matching algorithm presented in Section \ref{l_ce}.  It also does not require estimation of regret as needed in Section \ref{l_ce}.  Also, it has been observed that the price of anarchy (POA) of a CCE is no worse than that of a CE in a large class of games \cite{poa}.  However, it is also known that for some other classes of games, e.g., congestion games, the POA of CCE/CE can be larger compared to NE.\par
From the definition of CE, condition (\ref{def_ce}) requires that every user minimizes the conditional expectation of utility where the conditioning is on $\phi$ and $\ki{k}$.  In CCE, user $i$ contemplates a deviation $\ki{\hat{k}}$ knowing only the distribution $\phi$. \par
Let $\ki{c}(\ki{k},k^{(-i)})$ be the cost of user $i$ and each user chooses its action to minimize the cost.  In our power allocation problem, we can define cost as negative of the utility, i.e., $\ki{c}(\ki{k},k^{(-i)}) = -\ki{u}(\ki{k},k^{(-i)})$.  We define the CCE of a cost minimization game as
\begin{mydef}
A distribution $\phi$ on $\mc{A}$ is said to be a {\bf coarse correlated equilibrium} if 
\begin{equation}
\mb{E}_{k \sim \phi}\left[\ki{C}(k)\right] \leq \mb{E}_{k \sim \phi}\left[\ki{C}(\ki{\hat{k}},k^{(-i)})\right],
\end{equation}
for each user $i \in \{1,2,\dots,N\}$, and for all actions $\ki{k},\ki{\hat{k}} \in \mc{A}_i$.  The distribution $\phi$ is called a $\epsilon$-coarse correlated equilibrium if 
\begin{equation}
\mb{E}_{k \sim \phi}\left[\ki{C}(k)\right] \leq \mb{E}_{k \sim \phi}\left[\ki{C}(\ki{\hat{k}},k^{(-i)})\right] + \epsilon,
\end{equation}
for each user $i \in \{1,2,\dots,N\}$, for every action $\ki{k},\ki{\hat{k}} \in \mc{A}_i$.
\end{mydef}\par
Please note that whenever the cost is a random variable, we denote it by $\ki{C}$ rather than $\ki{c}$.  In this definition, cost is a random variable that depends on the randomly chosen actions $\ki{k}$.\par
Every CE is also a CCE and thus the set of CCE is a larger set than the set of CE.  There exist no-regret learning algorithms to compute a CCE but the notion of regret used to compute a CCE is different from that used to compute a CE.  The regret defined in Section \ref{sm} is known as internal regret and we use external regret to compute a CCE which is defined as 
\begin{mydef}
The regret of user $i$ given the pure strategy sequence $\ki{k}_1,\dots,\ki{k}_T$ with respect to an action $\ki{k}$ is 
\begin{equation}
\frac{1}{T}\sum_{t=1}^T \mb{E}_{k^{(-i)}\sim \phi^{(-i)}}\left[\kit{C}(\ki{k}_t,k^{(-i)}) - \kit{C}(\ki{k},k^{(-i)})\right].\label{extn_regret}
\end{equation}
\end{mydef}
An algorithm in which users update their strategies based on the received cost in such a way that the external regret converges to zero is a no-regret algorithm.  We now present a no-regret algorithm known as multiplicative weight algorithm to compute a CCE.\par
In the initial iteration $t=1$, each user assigns a weight $\ki{\mu}_t(\ki{k}) = 1$ to action $\ki{k} \in \mc{A}_i$.  User $i$ chooses an action $\ki{k}$ with probability
\begin{equation}
\ki{q}_t(\ki{k}) = \frac{\ki{\mu}_{t}(\ki{k})}{\sum_{\ki{\hat{k}}\in \mc{A}_i} \ki{\mu}_{t}(\ki{\hat{k}})}.\label{prob_update}
\end{equation}
During the iteration $t$, if $\ki{k}$ is the action chosen by user $i$ in iteration $t$, then it receives the expected utility $\ki{c}_t(\ki{k}) = \mb{E}_{k^{(-i)}\sim \phi^{(-i)}}[\ki{C}(\ki{k},k^{(-i)})]$.  Based on the received utility, user $i$ updates the weight $\ki{\mu}_{t}(\ki{k})$ of action $\ki{k}$, as 
\begin{equation}
\ki{\mu}_{t+1}(\ki{k}) = \ki{\mu}_t(\ki{k})(1-\epsilon)^{\ki{c}_t(\ki{k})}.\label{wt_update}
\end{equation}
For the iteration $t+1$, user $i$ chooses an action according to (\ref{prob_update}) with weights $\ki{\mu}_t$ replaced by $\ki{\mu}_{t+1}$ and this process is repeated. We have the following convergence result.
\begin{theorem}
Following the multiplicative weight update algorithm, there exists a positive integer $T$ such that the external regret of user $i$ defined in (\ref{extn_regret}) is less than $\epsilon$ after $T$ iterations.  Let $\phi_t = \Pi_{i=1}^N \ki{p}_t$ denote the outcome distribution at time $t$ and $\phi = \frac{1}{T}\sum_{t=1}^T \phi_t$.  Then $\phi$ is a $\epsilon$-coarse correlated equilibrium.
\end{theorem}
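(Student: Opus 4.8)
The plan is to prove the statement in two stages. Stage one is a single‑player, adversarial‑costs argument showing that the multiplicative weight recursion (\ref{prob_update})--(\ref{wt_update}) is a no‑external‑regret procedure for each user in isolation; stage two is a linearity‑of‑expectation argument that converts the vanishing time‑averaged regret into the $\epsilon$‑coarse correlated equilibrium inequality for $\phi=\frac{1}{T}\sum_{t=1}^T\phi_t$. Throughout I fix user $i$, write $\ki{q}_t$ for its mixed strategy (\ref{prob_update}) so that $\phi_t=\prod_{j=1}^N q_t^{(j)}$, and interpret the cost observed at time $t$ as $\ki{c}_t(\ki{k})=\mb{E}_{k^{(-i)}\sim q_t^{(-i)}}[\ki{C}(\ki{k},k^{(-i)})]$, the expected cost of action $\ki{k}$ against the others' \emph{current} play. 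Since the $\epsilon$‑CCE inequality is invariant under adding a constant to $\ki{C}$, and since such a shift multiplies all weights in (\ref{wt_update}) by a common factor and hence leaves the iterates $\ki{q}_t$ unchanged, I may assume without loss of generality that $\ki{C}\in[0,1]$.

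For stage one I use the standard total‑weight potential $\Phi_t^{(i)}=\sum_{\ki{k}\in\mc{A}_i}\ki{\mu}_t(\ki{k})$ and let $\bar{c}_t^{(i)}=\sum_{\ki{k}}\ki{q}_t(\ki{k})\ki{c}_t(\ki{k})$ be the algorithm's own expected cost. Using the elementary bounds $(1-\epsilon)^x\le 1-\epsilon x$ for $x\in[0,1]$, $1-y\le e^{-y}$, and the identity $\ki{q}_t(\ki{k})=\ki{\mu}_t(\ki{k})/\Phi_t^{(i)}$ from (\ref{prob_update}), I obtain
\begin{equation*}
\Phi_{t+1}^{(i)}=\sum_{\ki{k}}\ki{\mu}_t(\ki{k})(1-\epsilon)^{\ki{c}_t(\ki{k})}\le \Phi_t^{(i)}\Big(1-\epsilon\,\bar{c}_t^{(i)}\Big)\le \Phi_t^{(i)}\,e^{-\epsilon\,\bar{c}_t^{(i)}}.
\end{equation*}
Telescoping from $\Phi_1^{(i)}=L_i$ gives $\Phi_{T+1}^{(i)}\le L_i\exp\!\big(-\epsilon\sum_{t=1}^T\bar{c}_t^{(i)}\big)$, while dropping all but one term gives, for any fixed $\ki{\hat{k}}$,
\begin{equation*}
\Phi_{T+1}^{(i)}\ge \ki{\mu}_{T+1}(\ki{\hat{k}})=(1-\epsilon)^{\sum_{t=1}^T\ki{c}_t(\ki{\hat{k}})}.
\end{equation*}
Combining the two, taking logarithms, using $-\ln(1-\epsilon)\le \epsilon+\epsilon^2$, and dividing by $\epsilon T$ yields the regret bound
\begin{equation*}
\frac{1}{T}\sum_{t=1}^T\bar{c}_t^{(i)}-\frac{1}{T}\sum_{t=1}^T\ki{c}_t(\ki{\hat{k}})\le \frac{\ln L_i}{\epsilon T}+\epsilon,
\end{equation*}
so the time‑averaged external regret against any fixed $\ki{\hat{k}}$ is at most $\epsilon+\ln L_i/(\epsilon T)$, which falls below any prescribed tolerance for $T$ large; this establishes the first assertion.

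For stage two I identify this bound with the CCE condition. Because $\phi_t=\prod_{j=1}^N q_t^{(j)}$ is a product distribution, $\bar{c}_t^{(i)}=\mb{E}_{k\sim\phi_t}[\ki{C}(k)]$, whereas the deviation cost is $\ki{c}_t(\ki{\hat{k}})=\mb{E}_{k^{(-i)}\sim q_t^{(-i)}}[\ki{C}(\ki{\hat{k}},k^{(-i)})]$. Averaging over $t$ and using $\phi=\frac{1}{T}\sum_t\phi_t$ together with the fact that the $-i$ marginal of $\phi$ equals $\frac{1}{T}\sum_t q_t^{(-i)}=\phi^{(-i)}$, linearity of expectation gives $\frac{1}{T}\sum_t\bar{c}_t^{(i)}=\mb{E}_{k\sim\phi}[\ki{C}(k)]$ and $\frac{1}{T}\sum_t\ki{c}_t(\ki{\hat{k}})=\mb{E}_{k\sim\phi}[\ki{C}(\ki{\hat{k}},k^{(-i)})]$. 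Substituting into the regret bound, the inequality becomes exactly
\begin{equation*}
\mb{E}_{k\sim\phi}[\ki{C}(k)]\le \mb{E}_{k\sim\phi}[\ki{C}(\ki{\hat{k}},k^{(-i)})]+\epsilon,
\end{equation*}
holding for every $i$ and every $\ki{\hat{k}}\in\mc{A}_i$, which is precisely the definition of an $\epsilon$‑coarse correlated equilibrium.

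The potential argument is entirely routine, so the step demanding care is stage two: matching the external regret of (\ref{extn_regret}) --- stated through a conditional expectation over $\phi^{(-i)}$ --- with the running product structure of the $\phi_t$, and verifying that the $-i$ marginal of the time average $\phi$ is the time average of the product marginals $q_t^{(-i)}$. A secondary bookkeeping point is reconciling the literal ``$<\epsilon$'' of the statement with the derived bound $\epsilon+\ln L_i/(\epsilon T)$: this is handled by treating the step size and the regret tolerance as separate parameters (take step size $\delta/2$ and $T>2\ln L_i/\delta^2$ to force regret below a target $\delta$), a distinction the theorem suppresses notationally.
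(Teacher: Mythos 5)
Your proof is correct, and it is worth noting that the paper itself offers no proof of this theorem at all: it is stated as a known result and implicitly delegated to the cited references on the multiplicative weights method and on prediction/learning in games. What you have written is essentially the standard argument that those references contain --- the total-weight potential $\Phi_t^{(i)}$ with the upper bound $\Phi_{T+1}^{(i)}\le L_i\exp(-\epsilon\sum_t\bar c_t^{(i)})$ and the single-term lower bound $(1-\epsilon)^{\sum_t \ki{c}_t(\ki{\hat k})}$, followed by the linearity-of-expectation identification of time-averaged regret with the $\epsilon$-CCE inequality for $\phi=\frac{1}{T}\sum_t\phi_t$ --- so you are supplying exactly the omitted content rather than taking a different route. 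Two of your side remarks are genuinely needed to make the paper's statement literally true and deserve emphasis. First, the conflation of the step size in (\ref{wt_update}) with the equilibrium tolerance: the derived bound is $\epsilon+\ln L_i/(\epsilon T)$, which never drops below the step size itself, so the ``$<\epsilon$'' claim only holds after decoupling the two parameters as you do. Second, the reduction to costs in $[0,1]$: the paper sets $\ki{c}=-\ki{u}$ with $\ki{u}\in[0,1]$, so the exponent in (\ref{wt_update}) is nonpositive and the bound $(1-\epsilon)^x\le 1-\epsilon x$ does not apply directly; your observation that a constant shift rescales all weights uniformly and leaves both the iterates and the CCE inequality unchanged is the correct fix. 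One further point you could make explicit: the paper's prose reads as though only the weight of the action actually played is updated at iteration $t$, whereas your potential argument (correctly) requires every action's weight to be updated with its expected cost each round; the full-information reading is the one under which the theorem is true as stated.
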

We use this MW algorithm to find a CCE of our power allocation game.  In general to use the MW algorithm user $i$ needs to know the expected utility.  In our game user $i$ finds it given the history of actions and rewards as 
\begin{eqnarray*}
\ki{u}_t(\ki{k}) & = &\mb{E}_{k^{(-i)}\sim \phi^{(-i)}}[\ki{U}_t(\ki{k},k^{(-i)})],\\
& = &\frac{1}{t}\sum_{\tilde{t}=1}^tw^{(i)}_{\tilde{t}}(\ki{k},H_{ii}(\tilde{t}))1_{k^{(i)}_{\tilde{t}} = \ki{k}}.
\end{eqnarray*}
Based on $\ki{u}_t$, user $i$ updates its weights using the MW update and chooses action according to (\ref{prob_update}) with $\ki{c}_t(\ki{k}) = -\ki{u}_t(\ki{k})$.  Unlike in the algorithm of CE, we do not need to evaluate the estimated reward as the MW algorithm does not explicitly depend on the regret defined in (\ref{extn_regret}).  But the MW algorithm guarantees that the external regret converges to zero.  Hence we can apply the MW algorithm to our problem to find a CCE.\par
\section{Pareto Optimal Points}\label{pt}
\begin{mydef}
An action profile $k \in \mathcal{A}$ is {\bf Pareto optimal} if there does not exist another action profile $\hat{k} \in \mathcal{A}$ such that $u_i(\hat{k}) \geq u_i(k)$ for all $i = 1,\dots,N$ with at least one strict inequality.
\end{mydef}
In this section we present a \emph{distributed} algorithm to find a Pareto optimal point.\par
The global maximum of 
\begin{eqnarray}
W(k) & = &\sum_{i=1}^N \alpha_i u_i(k),\nonumber \\
\text{ subject to } & & k \in \mathcal{A},\label{pareto}
\end{eqnarray}
is a Pareto optimal solution, where $\alpha_i$ are positive constants \cite{MOP}.  We find Pareto points by finding a solution of (\ref{pareto}).\par
We assume that when a receiver sends an ACK/NACK to its transmitter, all the other transmitters can also listen to it without error.  This is realistic in many wireless systems because an ACK/NACK message is small and is usually transmitted at low rates with very low probability of error.  Under this assumption, we present a learning algorithm in which users may or may not choose to experiment and update their strategies in such a way that improves $W(k)$.\par
The algorithm is as follows:
\begin{itemize}
\item Each user $i$ chooses a random action $\ki{k}$ uniformly from $\ki{\mathcal{A}}$.  All the users use these randomly chosen actions for a fixed number $T$ of time slots.  Each user $i = 1,\dots,N,$ follows the procedure below sequentially:
\item As user $i$ receives the feedback of other users, it finds the weighted sum $\hat{W}(k)$ of the utilities 
\begin{equation}
\hat{W}(k) = \sum_{i=1}^N \alpha_i \left(\frac{1}{T}\sum_{t = 1}^T  \kit{w}(\ki{k};\kitt{H})\right).
\end{equation}
At the end of $T$ slots, user $i$ experiment with probability $\delta$. When user $i$ experiments, with probability $\epsilon$, chooses an action randomly with uniform probability from $\mathcal{A}^{(i)}$ other than $\ki{k}$, and with probability $1-\epsilon$ it chooses an action other than $\ki{k}$ in the following way:
\begin{itemize}
\item In the action $\ki{k}$, a power level has been specified for each value of direct link channel gain.  User $i$ chooses an action randomly from a subset of $\mathcal{A}^{(i)}$, with feasible actions having higher power level than $\ki{k}$ for a channel state with the highest probability of occurrence.  If this subset is empty, then it chooses an action with higher power level for the channel gain with second highest probability of occurrence. 
\item If all the direct link channel gains occur with equal probability, then user $i$ chooses an action randomly from a subset of $\ki{\mathcal{A}}$, with feasible actions having higher power level for the maximum value of direct link channel gain.  If this subset is empty, it chooses an action with higher power level for the second maximum direct link channel gain.
\end{itemize}
\item Let this new action be $\hat{k}^{(i)}$.  For the next $T$ time slots, user $i$ uses action $\hat{k}^{(i)}$, and user $j$ uses actions $k^{(j)}$ for $j \neq i$.  User $i$ finds the weighted sum of the utilities of all the users $\hat{W}(\hat{k}^{(i)},k^{(-i)})$.  If $\hat{W}(\hat{k}^{(i)},k^{(-i)}) > \hat{W}(k^{(i)},k^{(-i)})$, then user $i$ replaces its action $k^{(i)}$ with $\hat{k}^{(i)}$, and this new weighted sum of the average rewards is taken as a benchmark.  If there is no improvement in the weighted sum of the average rewards, it randomly selects another action following the procedure described above.  Thus each user may experiment with upto a maximum of MAX number of actions chosen randomly.
\end{itemize}\par
If $\epsilon = 1$, each user experiments with randomly chosen actions.  But, for small $\epsilon$, in our algorithm we are selecting an action from the action space of that user by a \emph{local search}.  The local search often yields a better point, that improves $W(k)$, than a purely random search in the entire action space, and yields a faster rate of convergence as seen in our numerical examples.\par
A user updates its action whenever there is an improvement in the weighted sum of average reward over the benchmark.  Hence, this benchmark of utility is monotonically increasing and bounded above by $\sum \alpha_i$.  Therefore, for a sufficiently large $M$, we find a Pareto optimal point with a large probability.  By increasing MAX, this probability can be made arbitrarily close to $1$.\par
Our algorithm is a distributed version of a meta heuristic, \emph{stochastic local search} \cite{s_s}, often used for global optimization.\par
We can also obtain a Pareto point which satisfies certain minimum probability of success (e.g., for voice users) by including this constraint in the set $\mathcal{A}$.  Pareto points, although they globally maximize $W(k)$, may be \emph{unfair} to some users.  Changing the weights $\alpha_i$ can alleviate some unfairness.  Otherwise, one can obtain Pareto points which are \emph{Nash bargaining solutions} \cite{BASAR}, which can be obtained via a similar algorithm as explained in the next section.
\section{Nash Bargaining}\label{nb}
In Nash bargaining, we specify a disagreement outcome that specifies utility of each user that it receives by playing the disagreement strategy whenever there is no incentive to play the bargaining outcome.  Thus, by choosing the disagreement outcomes appropriately, the users can ensure certain fairness.  \par
The Nash bargaining solutions are Pareto optimal and also satisfy certain natural axioms \cite{nash_b}.  It is shown in \cite{nash_b} that for a two player game, there exists a unique bargaining solution (if the feasible region is nonempty) that satisfies the axioms stated above and it is given by the solution of the optimization problem
\begin{eqnarray}
\text{maximize } & & (s_1-d_1)(s_2-d_2), \nonumber \\
\text{subject to } & & s_i \geq d_i, i = 1,2, (s_1,s_2) \in \mc{S}.
\end{eqnarray}
For an N-user Nash bargaining problem, this result can be extended and the solution of an N-user bargaining problem is the solution of the optimization problem
\begin{eqnarray}
\text{maximize } & & \Pi_{i=1}^N (s_i-d_i), \nonumber \\
\text{subject to } & & s_i \geq d_i, i = 1,\dots,N, (s_1,\dots,s_N) \in \mc{S}.\label{nb_n}
\end{eqnarray}\par
A Nash bargaining solution is also related to proportional fairness, another fairness concept commonly used in communication literature.  A utility vector $s^* \in \mc{S}$ is said to be \emph{proportionally fair} if for any other feasible vector $s \in \mc{S}$, for each $k$, the aggregate proportional change $s_k-s^*_k/s_k^*$ is non-positive \cite{prop_fair}.  If the set $\mc{S}$ is convex, then Nash bargaining and proportional fairness are equivalent \cite{prop_fair}.  Proportional fairness is studied in \cite{nb_pf} when $\mc{S}$ is non-convex.  In our case, $\mc{S}$ is convex and hence Nash bargaining solution is also proportionally fair.\par
A major problem in finding a solution of a bargaining problem is choosing the disagreement outcome.  It is more common to consider an equilibrium point as a disagreement outcome.  In our problem we can consider the utility vector at a CE as the disagreement outcome.  We can also choose $d_i = 0$ for each $i$.  If we choose the disagreement outcome to be a CE, each user needs to evaluate a CE first before running the algorithm to find a solution of (\ref{nb_n}), which requires more computations.  Instead, we can choose the disagreement outcome to be the zero vector or by using the following procedure :
\begin{itemize}
\item Each user chooses an action that gives higher power level to the channel state that has higher probability of occurrence.  In other words, among the set of feasible actions, choose a subset of pure strategies that gives the highest power level to the channel state with highest probability of occurrence.  We shrink the subset by considering the actions that give higher power level to the second frequently occurring channel state and we repeat this process until we get a single strategy.
\item If all the channel states occur with equal probability, we follow the above procedure by considering the value of the channel gain instead of the probabilities of occurrence of the channel gains. 
\end{itemize}
Let the pure strategy chosen by user $i$ be $\ki{k}$, and assume that the users use these strategies for a fixed number $T_d$ of slots.  User $i$ finds $d_i$ by averaging the rewards received in the $T_d$ slots, i.e.,
\begin{equation}
d_i = \frac{1}{T_d}\sum_{t=1}^{T_d} w_i(\ki{k};\kitt{H}).
\end{equation} \par
For our numerical evaluations we have chosen the disagreement outcome following the procedure described above instead of choosing the zero vector.  To find the bargaining solution, i.e., to solve the optimization problem (\ref{nb_n}), we use the algorithm of Section \ref{pt} used to find a Pareto optimal point but with objective $W(k)$ defined as
\begin{equation*}
 W(k) = \Pi_{i=1}^N (u_i(k)-d_i).
\end{equation*}
In Section \ref{ne}, we present a Nash bargaining solution for the numerical examples we consider, and observe that the Nash bargaining solution obtained is a Pareto optimal point which provides fairness among the users.
\section{Transmission at Multiple Rates}\label{m_rates}
Until now, we have presented learning algorithms to compute a CE, a CCE, Pareto points and Nash bargaining solutions, when a user is transmitting at a fixed rate.  In this section, we generalize the model so that a user can transmit at multiple rates rather than at a fixed rate and show that we can still use the same algorithms to compute equilibria.\par
Let $\ki{\mc{R}} = \{\ki{r}_1,\dots,\ki{r}_{l_i}\}$ be the set of possible transmission rates of user $i$.  Let $\{p_1^{(i)},\dots,p_{m_i}^{(i)}\}$ be the set of power levels for user $i$, as considered earlier.  We denote the new strategy set as 
\begin{multline}
\ki{\tilde{\mc{A}}} = \left\{(\ki{r},P^{(i)}_{1},\dots,P^{(i)}_{n_i})|\ki{r}\in \ki{\mc{R}}, P^{(i)}_l \in \{p_1^{(i)},\dots,p_m^{(i)}\}, \sum_{l=1}^{n_i}\pi^{(i)}(l)P^{(i)}_l \leq \overline{P}_i \right\}.
\end{multline}
The cardinality of $\ki{\tilde{\mc{A}}}$ is $l_i$ times that of $\ki{\mc{A}}$, as every action $\ki{k} \in \ki{\mc{A}}$ can be associated with each rate in $\ki{\mc{R}}$.  We enumerate the elements of $\ki{\tilde{\mc{A}}}$ as in Section \ref{sm}.  Here also we denote an action by $\ki{k} \in \ki{\tilde{\mc{A}}}$, and $r(\ki{k})$ is the rate of transmission under the action $\ki{k}$.  If $H_{ii} \in \ki{\mc{H}}_d$ is the direct link channel gain of user $i$ and the user chooses action $\ki{k}$, then it transmits at a rate $r(\ki{k})$ with power $\ki{k}(H_{ii})$.\par
User $i$ receives an ACK if the interference at receiver $i$ satisfies
\begin{equation}
\ki{I} \leq \frac{|H_{ii}|^2\ki{k}(H_{ii})}{2^{r(\ki{k})} -1}-1,\label{ub_rate}
\end{equation}
and it receives a NACK otherwise.  We use the same notation $\gamma^{(i)}(\ki{k};H_{ii})$ to denote the upper bound on the interference for receiving an ACK.  We can redefine the estimated reward and estimated regret as in (\ref{est_reward}) and (\ref{est_regret}) respectively, but with the threshold redefined as
\begin{equation}
\gamma^{(i)}(\kit{k};\kitt{H}) = \frac{|\kitt{H}|^2\kit{k}(\kitt{H})}{2^{r(\kit{k})} -1}-1.
\end{equation}\par
It can easily be seen even in this case that the estimated reward $\kit{\tilde{w}}(\hat{k}^{(i)};\kitt{H})$ is greater than or equal to the actual reward $\kit{w}(\hat{k}^{(i)};\kitt{H})$.  Hence, we can use the regret-matching algorithm to compute CE for the game with $\ki{\tilde{\mc{A}}}$ as the strategy set.  We can also use the respective algorithms mentioned earlier to compute Pareto points, Nash bargaining solution, and CCE.\par
We can ensure that these solutions satisfy certain minimum rates by limiting our overall action space to strategies that satisfy these rate constraints.
\section{Numerical Examples}\label{ne}
In this section we consider three examples with three transmitter-receiver pairs in the communication system.  In the first example, we consider a symmetric scenario where $\mathcal{H}_d^{(i)} = \{0.2, 0.6, 1\}$ and $\mathcal{H}_c^{(i)} = \{0.1, 0.3, 0.5\}$ for each user $i$, and each channel state occurs with equal probability.  The set of possible power levels for each player is $\{0, 5, 10, 15, 20, 25, 30\}$.  Each user transmits at a rate of $r_i = 0.75$ bits per channel use, and receives feedback from its receiver.  Each user follows the learning algorithm (\ref{proc_1})-(\ref{strgy}) to find a CE, and finds a Pareto optimal strategy as described in Section \ref{pt}.  In finding the Pareto points, we choose $\alpha_i = r_i$ for all $i$.  The sum rate at a CE and at a Pareto point are compared in Figure \ref{fig_1}.  We also compare the sum rate at a CE obtained by using the reinforcement learning (RL) algorithm in \cite{rl_ce}.  We observe that the sum rates at CE obtained via our algorithm and that obtained via the algorithm in \cite{rl_ce} almost coincide in this example.\par
Even though the sum rates are close for both the algorithms, we observe that our algorithm convergences faster than the RL algorithm.  In the RL algorithm, it is required that each pure strategy of each user should be played for a minimum number of time slots to find the regret as defined in \cite{rl_ce}.  Thus the algorithm requires a larger number of iterations to converge to the set of correlated equilibria.  In this example, at SNR of 15dB, our algorithm converges in about $200,000$ iterations, whereas the algorithm in \cite{rl_ce} converges in about $700,000$ iterations.\par
In finding a Pareto point, if we randomly choose a strategy $(\epsilon = 1)$ each time, instead of local search, the algorithm runs for about $150,000$ iterations whereas our local search algorithm finds a Pareto point in $70,000$ iterations.\par
We also plot in Figure \ref{fig_1} the sum rate at a stochastically stable point of the trial and error based algorithm (TE) in \cite{L_eff}.  It is known from \cite{L_eff} that the algorithm therein converges to an efficient NE only if the game under consideration has at least one pure strategy NE.  In general, we can not guarantee existence of a pure strategy NE for our game, and hence the stochastically stable point computed by the algorithm in \cite{L_eff} need not be a NE.  Then the algorithm produces stochastically stable points that maximize $g(k) = \alpha_1 W(k) - \beta_1 S(k)$ for all $k \in \mathcal{A}$  where $W(k) = \sum_{i=1}^N u_i(k)$, and 
\begin{equation}
S(k) = \text{min}\{\delta: \text{the benchmark actions }\text{constitute a $\delta$-equilibrium}\}.
\end{equation}
We refer to \cite{L_eff} for further details of the function $g$.\par
We also plot the sum rate at the CCE and at the Nash bargaining solution obtained for Example 1 in Figure \ref{fig_1}.  We observe that the sum rate at a CCE is better than that at a CE, but that the MW algorithm runs for about $250,000$ iterations which is more than the number of iterations required for computing a CE.  The sum rate at the Nash bargaining solution is very close to that at the Pareto point, but the former provides fairness among users.  We present the rates at the Pareto point and at the Nash bargaining solution in Table \ref{table1} to illustrate the fairness provided by the Nash bargaining solution for Example 1.  The rates of all the three users are mentioned as a triplet $(r_1, r_2, r_3)$, where $r_i$ is the rate of user $i$.  It can be seen for several SNR values, that the Nash bargaining solution provides more fairness than at the Pareto point.  Even though Example 1 is symmetric, as the algorithms are based on stochastic local search, rate allocations need not be symmetric.\par
\begin{figure}
  \centering
  \includegraphics[height=9.5cm,width=17.0cm]{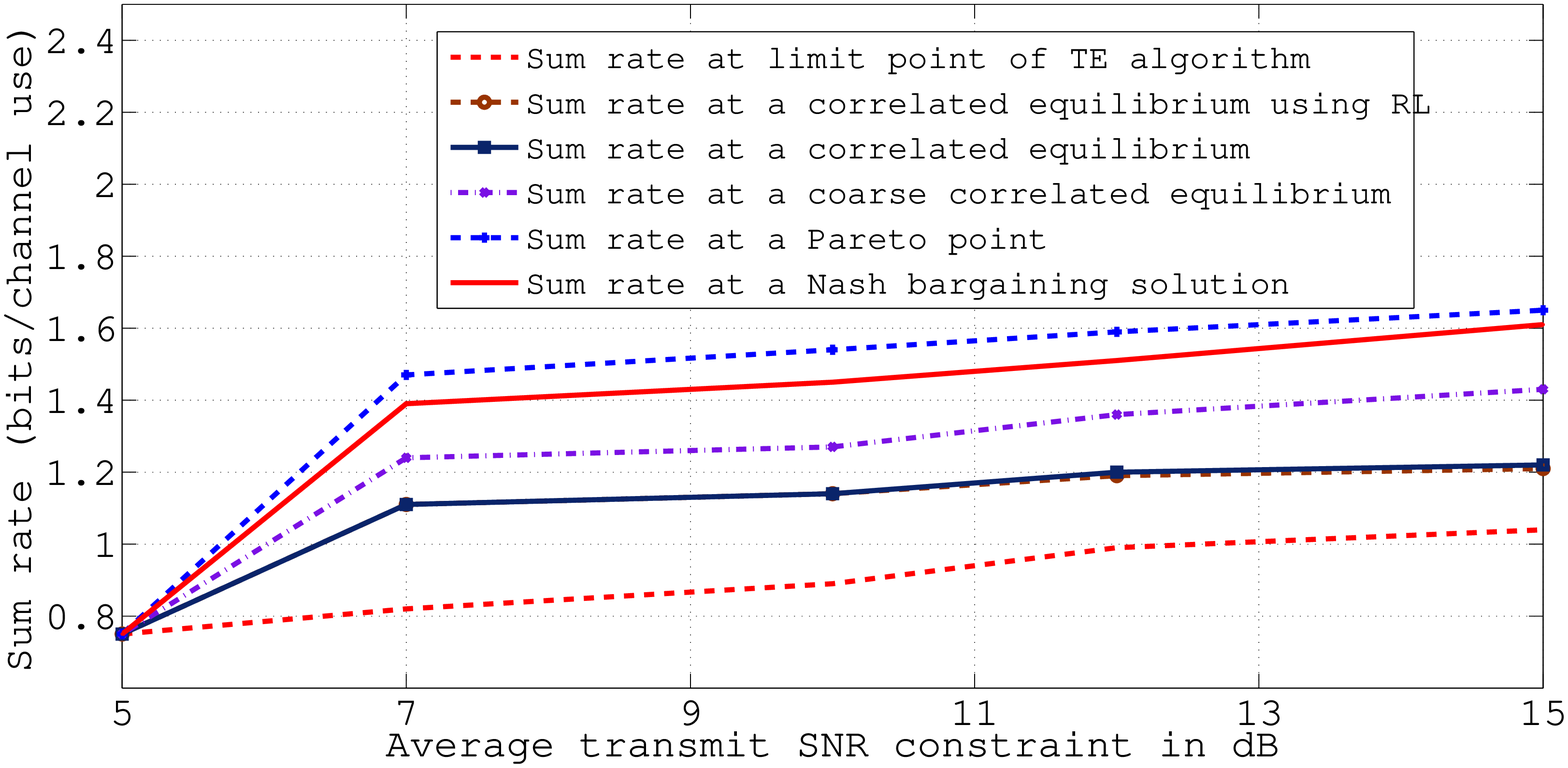}
  \caption{Sum rates at Pareto points and CE for Example 1.} 
  \label{fig_1}
\end{figure}

\begin{table}[h]
  \centering  
  {\footnotesize
  \begin{tabular}{|c|c|c|}
    \hline 
    SNR(dB) & Rates at Pareto point & Rates at Nash bargaining\\
    \hline
    & & \\
    5 & (0.25, 0.25, 0.25) & (0.25, 0.25, 0.25)\\ 
    & & \\    
    7 & (0.67, 0.37, 0.43) & (0.48, 0.45, 0.46)\\
    & & \\    
    10 & (0.58, 0.77, 0.19) & (0.49, 0.49, 0.47)\\
    & &\\    
    12 & (0.36, 0.34, 0.89) & (0.50, 0.51, 0.5)\\
    & &\\   
    15 & (0.55, 0.41, 0.69) & (0.54, 0.52, 0.55)\\    
    & &\\
    \hline
  \end{tabular}
  }
  \vspace{0.3cm}
  \caption{Fairness of rates at the Pareto point and the Nash bargaining solution for Example 1.}
  \label{table1}
\end{table}
We note that the sum rate of all the users is higher at the Pareto optimal point than at a CE.  The improvement is 21.8\% at the average transmit SNR constraint of 10dB, and 24.6\% at the SNR of 15dB. \par
In the computation of CE, for each pure strategy $k^{(i)} \in \ki{\mc{A}}$, we need to estimate the regret, which requires some calculation of a threshold in advance before starting the running of the algorithm.  This requires two multiplications and two additions per action for each user.  But, for one iteration in the MW algorithm, each user requires one division per action and two multiplications.  Hence, even though the regret-matching algorithm requires computation of estimated regret, it may converge faster than the MW algorithm which does not require computation of regret.  It is observed from examples that the regret-matching has relatively less running time than the MW algorithm.\par
Next we consider an asymmetric scenario, in Example 2.  In this example also we consider $\mathcal{H}_d^{(i)} = \{0.2, 0.6, 1\}$ and $\mathcal{H}_c^{(i)} = \{0.1, 0.3, 0.5\}$ for each user $i$.  The direct link gains from $\mathcal{H}_d^{(i)}$ occur with equal probability for each user $i$, but the cross link gains occur with a different probability distribution for each user.  For user $1$, the distribution is $\{0.5, 0.3, 0.2\}$, for user 2 it is $\{0.4, 0.5, 0.1\}$, and for user 3, it is $\{0.25, 0.5, 0.25\}$.  Users $1, 2,$ and $3$ transmit at rates $0.5, 0.75,\text{ and }0.9$ bits per channel use respectively.  The sum rate at the CE and at the Pareto point obtained from our algorithm are compared in Figure \ref{fig_2}.  We also compare the sum rate at the CE obtained by using the RL algorithm in \cite{rl_ce}.  In this example also, we observe that our algorithm converges faster than the RL algorithm: at SNR of 15dB, our algorithm converges in about $250,000$ iterations, whereas the algorithm in \cite{rl_ce} converges in about $770,000$ iterations.  We also plot the sum rate at a stochastically stable point of the algorithm in \cite{L_eff} which maximizes $g(k) = \alpha_1 W(k) - \beta_1 S(k)$. \par
\begin{figure}
  \centering
  \includegraphics[height=10.0cm,width=17.0cm]{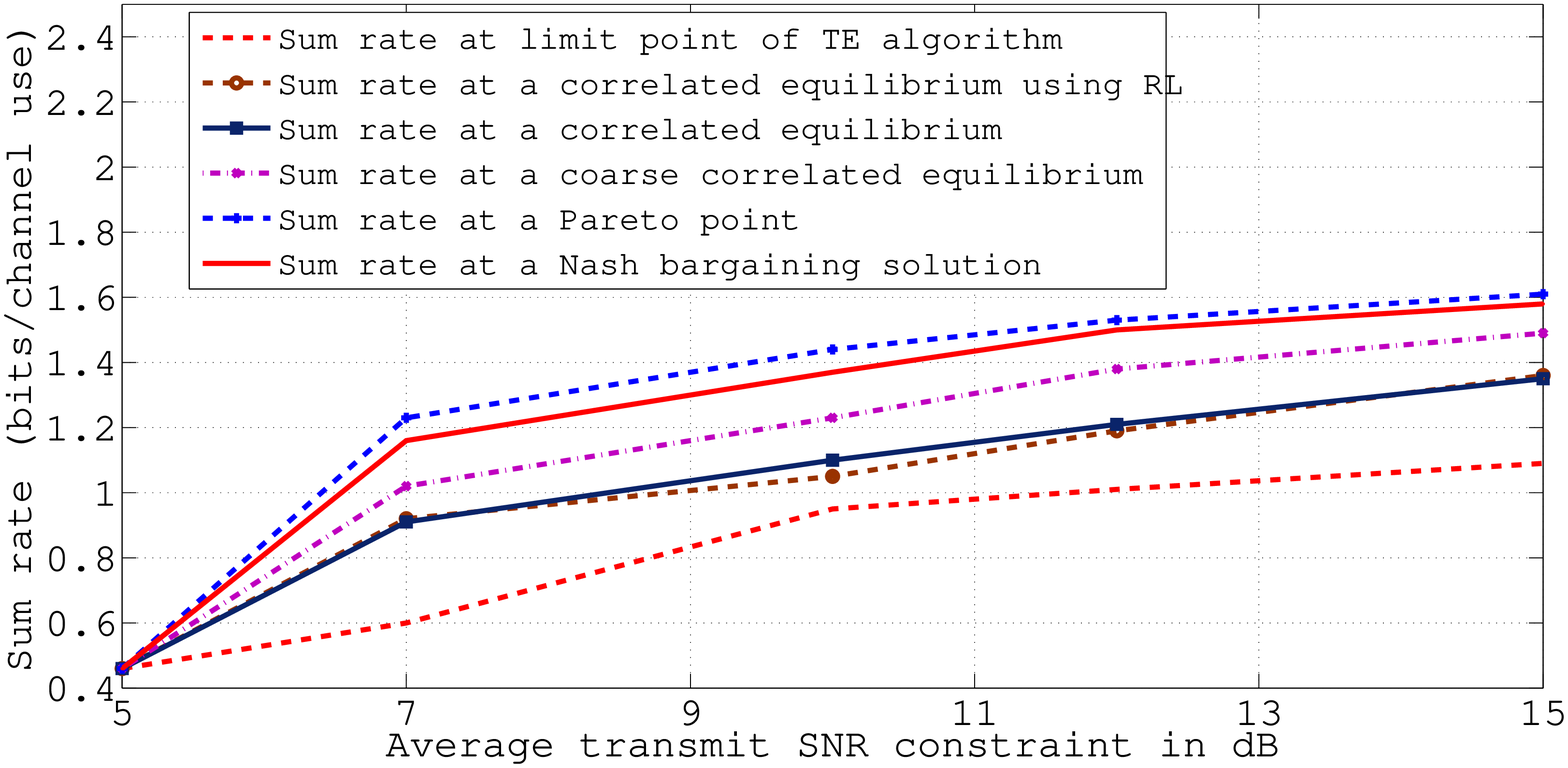}
  \caption{Sum rates at Pareto points and CE for Example 2.}  
  \label{fig_2}
\end{figure}
We also plot the sum rate at a CCE and at a Nash bargaining solution for Example 2 in Figure \ref{fig_2}.  In this example also, we observe an improvement in the sum rate at a CCE over that at a CE, and the MW algorithm runs for about $325,000$ iterations which is more than the number of iterations required for computing a CE.  The sum rate at the Nash bargaining solution is very close to that at the Pareto point.  We observe that the sum of the rates of all the users is higher at the Pareto optimal point than at a CE.  We observe an improvement of 22.7\% at SNR 10dB and an improvement of 17.5\% at SNR of 15dB.  The sum rate obtained via \cite{L_eff} at its stochastically stable point is the lowest.\par
\begin{figure}
  \centering
  \includegraphics[height=10.0cm,width=17.0cm]{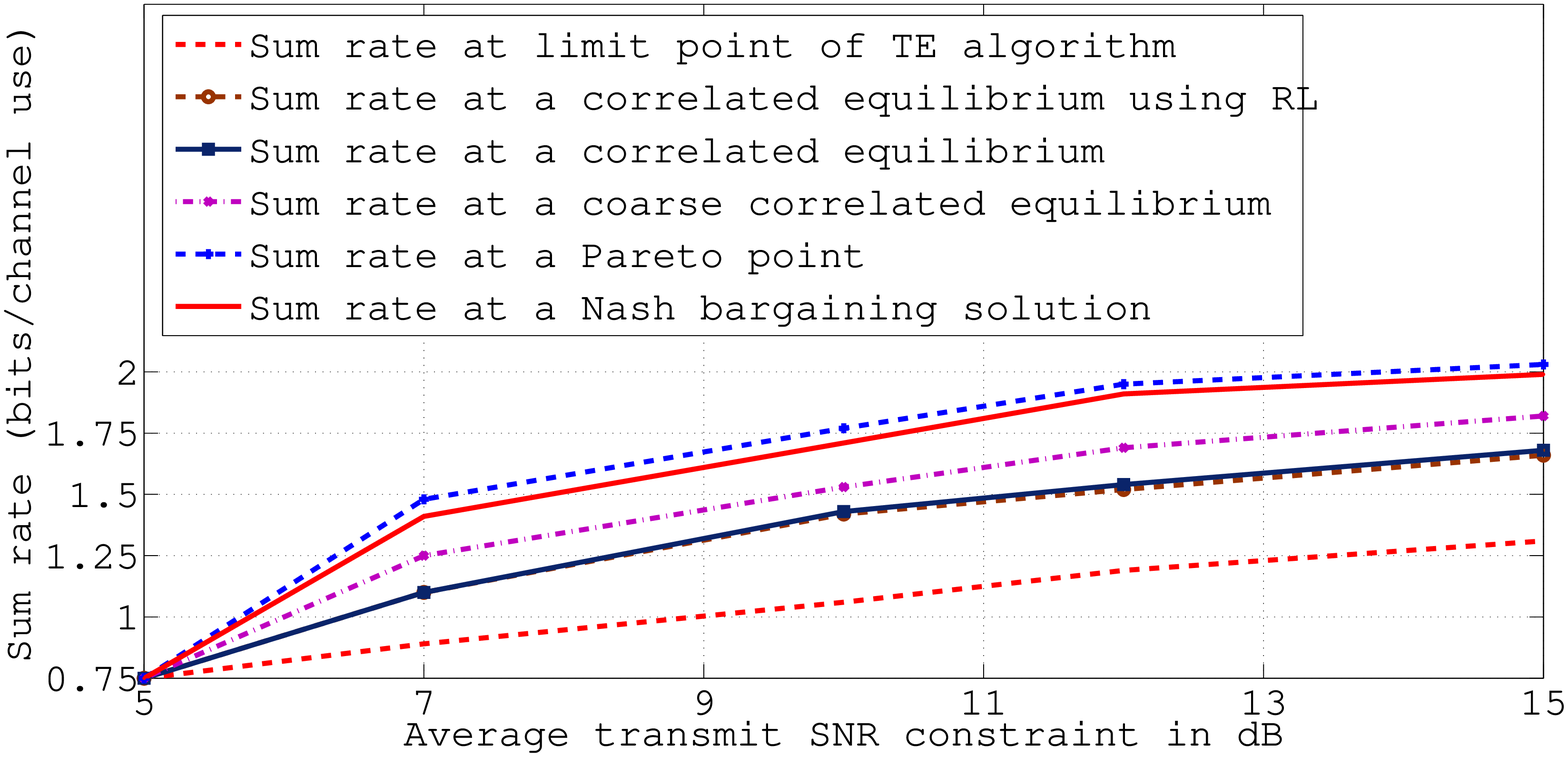}
  \caption{Sum rates at Pareto points and CE for Example 3.} 
  \label{fig_3}
\end{figure}
Finally, we consider multiple rates of transmission in Example 3.  For Example 3, we consider the same parameters as in Example 2, but each user can send data at any rate from the set $\mc{R} = \{0.75, 0.9, 1.2\}$.  We compare the sum rates at a CE, CCE, Pareto point, and Nash bargaining solution, in Figure \ref{fig_3}.  We also plot the sum rates using the RL algorithm and TE algorithm for this example in Figure \ref{fig_3}.  In this example also we observe an improvement in the sum rate at the CCE over the sum rate at the CE.  Sum rate at the Nash bargaining solution and at the Pareto point almost coincide in this example also.  As the cardinality of the strategy set of each user is enlarged by transmitting at multiple rates, the regret-matching algorithm runs for about $400,000$ iterations to compute a CE and the MW algorithm runs for about $475,000$ iterations to compute a CCE, at SNR of 15dB.
\section{Conclusions}\label{concl}
We have considered a communication system in which $N$ transmitter-receiver pairs communicate on a wireless channel.  Each transmitter sends data at a certain rate at a power level that is a function of the direct link channel gain and the feedback received from its receiver, to maximize the probability of successful transmission.  This scenario is modeled as a stochastic game and fully distributed learning algorithms are proposed to find a correlated equilibrium (CE) and a Pareto point.  We have compared the sum of rates of all the users at the CE and the Pareto point, and we observe that the Pareto optimal power allocations provide higher probability of successful transmission.  We have also compared our algorithms with two other recent learning algorithms in literature \cite{rl_ce}, \cite{L_eff}.  The CE obtained by our algorithm performs as well as the CE obtained via the algorithm in \cite{rl_ce} but our algorithm converges much faster.  On the other hand the performance of our CE is better than the best point obtained by the algorithm in \cite{L_eff}.\par
We also note in our examples that we can achieve a higher sum rate by operating at a CCE than operating at a CE but at the expense of more number of iterations to compute it.  But, in general, it is not guaranteed that a CCE yields a better sum rate than a CE.  On the other hand a Nash bargaining solution may be a better operating point than an arbitrary Pareto point as it provides fairness among the users.\par
Transmitting at multiple rates can significantly improve the sum rate but as the strategy set of each player is enlarged, it requires more number of iterations to converge to either a Nash bargaining solution or a CCE.  In practice, which algorithm to use depends on the nature of the problem, i.e., if for example, the cardinality of the overall action space is not large, then the users can find a Pareto point more quickly than a CE or CCE.  But if the action space is large and if it requires to converge to an equilibrium quickly, one can use regret-matching to converge to a CE.

\end{document}